\newtheorem{Definition}{Definition}[section]
\newtheorem{Lemma}{{Lemma}}[section]
\newcommand{\ignore}[1]{}
\DeclarePairedDelimiter{\abs}{\lvert}{\rvert}
\mathchardef\mhyphen="2D
\newcommand{\algrule}[1][.2pt]{\par\vskip.5\baselineskip\hrule height #1\par\vskip.5\baselineskip}
\newcommand{\AI}{\ensuremath {\mathcal{A}_I}{\xspace}}
\newcommand{\AII}{\ensuremath {\mathcal{A}_{II}}{\xspace}}
\newcommand{\A}{\ensuremath {\mathcal{A}}{\xspace}}
\newcommand{\Rq}{\ensuremath \stackrel{\$}{\leftarrow}\mathbb{Z}_{q}{\xspace}}
\newcommand{\Ra}{\ensuremath \stackrel{\$}{\leftarrow}{\xspace}}
\newcommand{\ZZ}{\ensuremath {\mathbb{Z}}{\xspace}}
\mathchardef\mhyphen="2D
\newcommand{\sk}{\ensuremath {\mathit{sk}}{\xspace}}
\newcommand{\pk}{\ensuremath {\mathit{PK}}{\xspace}}
\newcommand{\as}{\ensuremath {\leftarrow}{\xspace}}
\newcommand{\E}{\ensuremath {\mathcal{E}}{\xspace}}
\newcommand{\idsgnsetup}{\ensuremath {\texttt{IBS.Setup}}{\xspace}}
\newcommand{\idsgnextract}{\ensuremath {\texttt{IBS.Extract}}{\xspace}}
\newcommand{\sgnsig}{\ensuremath {\texttt{CLS.Sig}}{\xspace}}
\newcommand{\sgnver}{\ensuremath {\texttt{CLS.Ver}}{\xspace}}
\newcommand{\Adv}{\ensuremath { \mathcal{A}}}
\newcommand{\eat}[1]{}                % Comment out large swaths of text
\newcommand{\m}{\ensuremath {m}{\xspace}}
\newcounter{linecounter}
\newcommand{\Fq}{\ensuremath {\mathbb{F}_p}{\xspace}}
\newcommand{\EC}{\ensuremath {E(\Fq) }{\xspace}}
\newcommand{\Zp}{\ensuremath {\mathbb{Z}_q}{\xspace}}
\newcommand{\hash}{\ensuremath {\texttt{H}}{\xspace}}
\newcommand{\extract}{\ensuremath {\texttt{Extract}}{\xspace}}
\newcommand{\setup}{\ensuremath {\texttt{Setup}}{\xspace}}
\newcommand{\enc}{\ensuremath {\texttt{Enc}}{\xspace}}
\newcommand{\dec}{\ensuremath {\texttt{Dec}}{\xspace}}
\newcommand{\IDextract}{\ensuremath {\texttt{IBE.Extract}}{\xspace}}
\newcommand{\IDsetup}{\ensuremath {\texttt{IBE.Setup}}{\xspace}}
\newcommand{\IDenc}{\ensuremath {\texttt{IBE.Enc}}{\xspace}}
\newcommand{\IDdec}{\ensuremath {\texttt{IBE.Dec}}{\xspace}}
\newcommand{\ibs}{\ensuremath {\texttt{IBS}}{\xspace}}
\newcommand{\IBSExtract}{\ensuremath {\texttt{IBS.Extract}}{\xspace}}
\newcommand{\IBSSetup}{\ensuremath {\texttt{IBS.Setup}}{\xspace}}
\newcommand{\IBSSign}{\ensuremath {\texttt{IBS.Sign}}{\xspace}}
\newcommand{\IBSVerify}{\ensuremath {\texttt{IBS.Verify}}{\xspace}}
\newcommand{\Chall}{\ensuremath {\mathcal{C}}{\xspace}}
\newcommand{\cls}{\ensuremath {\texttt{CLS}}{\xspace}}
\newcommand{\cle}{\ensuremath {\texttt{CLE}}{\xspace}}
\newcommand{\CLEKGCsetup}{\ensuremath {\texttt{CLE.KGCSetup}}{\xspace}}
\newcommand{\CLEUserSetup}{\ensuremath {\texttt{CLE.UserSetup}}{\xspace}}
\newcommand{\CLEPartialKeyGen}{\ensuremath {\texttt{CLE.PartKeyGen}}{\xspace}}
\newcommand{\CLEUserKeyGen}{\ensuremath {\texttt{CLE.UserKeyGen}}{\xspace}}
\newcommand{\cleenc}{\ensuremath {\texttt{CLE.Enc}}{\xspace}}
\newcommand{\cledec}{\ensuremath {\texttt{CLE.Dec}}{\xspace}}
\newcommand{\CLSKGCsetup}{\ensuremath {\texttt{CLS.KGCSetup}}{\xspace}}
\newcommand{\CLSUserSetup}{\ensuremath {\texttt{CLS.UserSetup}}{\xspace}}
\newcommand{\CLSPartialKeyGen}{\ensuremath {\texttt{CLS.PartKeyGen}}{\xspace}}
\newcommand{\CLSUserKeyGen}{\ensuremath {\texttt{CLS.UserKeyGen}}{\xspace}}
\newcommand{\CLSSign}{\ensuremath {\texttt{CLS.Sign}}{\xspace}}
\newcommand{\CLSVerify}{\ensuremath {\texttt{CLS.Verify}}{\xspace}}
\newcommand{\IBESetup}{\ensuremath {\texttt{IBE.Setup}}{\xspace}}
\newcommand{\IBEExtract}{\ensuremath {\texttt{IBE.Extract}}{\xspace}}
\newcommand{\IBEEnc}{\ensuremath {\texttt{IBE.Enc}}{\xspace}}
 \newcommand{\IBEDec}{\ensuremath {\texttt{IBE.Dec}}{\xspace}}
\newcommand{\ListHI}{\ensuremath {\texttt{List}_{\texttt{H}_1}{\xspace}}}
 \newcommand{\ListHII}{\ensuremath {\texttt{List}_{\texttt{H}_2}{\xspace}}}
 \newcommand{\ListHIII}{\ensuremath {\texttt{List}_{\texttt{H}_3}{\xspace}}}
 \newcommand{\ListHIV}{\ensuremath {\texttt{List}_{\texttt{H}_4}{\xspace}}}
 \newtheorem{theorem}{Theorem}
\newlength\myindent
\newcommand{\MYhref}[3][blue]{\href{#2}{\color{#1}{#3}}}%
\begin{document}
%\title{No Strings Attached:  Efficient Certificate-Free Cryptosystems}
\title{Compatible Certificateless and Identity-Based Cryptosystems for Heterogeneous IoT}
%\titlerunning{Compatible Certificateless and Identity-Based Cryptosystems for  IoT}
\author{

	\IEEEauthorblockN{Rouzbeh Behnia}
	\IEEEauthorblockA{University of South Florida\\
		 Tampa, Florida, USA \\
		behnia@usf.edu}
	\and
	\IEEEauthorblockN{Attila A. Yavuz}
	\IEEEauthorblockA{University of South Florida\\
		Tampa, Florida, USA \\
		attilaayavuz@usf.edu}
		\and
\IEEEauthorblockN{Muslum Ozgur Ozmen$ ^\dagger $\thanks{$ \dagger $Work done in part when 	Muslum Ozgur Ozmen was at  University of South Florida.}} 
\IEEEauthorblockA{Purdue University\\ 
	West Lafayette, Indiana, USA \\
	mozmen@purdue.edu}

		\and

\IEEEauthorblockN{Tsz Hon Yuen} 
\IEEEauthorblockA{The University of Hong Kong\\ 
	Pokfulam, Hong Kong\\
	thyuen@cs.hku.hk}

}

\maketitle

\begin{abstract}
%Certificates are needed to ensure the authenticity of users' public keys.  However, the overhead introduced by   certificates (chains) might be too costly for some  IoT systems like   aerial drones. Certificate-free cryptosystems, like identity-based and certificateless systems, lift the burden of certificates  and   could be a suitable alternative for such IoT systems. However, despite their merits, there is a   research gap in proposing compatible identity-based and certificateless systems to  allow users from different domains (identity-based or certificateless) to communicate seamlessly.  Moreover,  more efficient constructions    would further  encourage  their   adoption in  resource-limited IoT systems.

Certificates ensure the authenticity of users' public keys, however their overhead  (e.g., certificate chains) might be too costly for some  IoT systems like   aerial drones. Certificate-free cryptosystems, like identity-based and certificateless systems, lift the burden of certificates  and   could be a suitable alternative for such IoTs. However, despite their merits, there is a   research gap in achieving compatible identity-based and certificateless systems to  allow users from different domains (identity-based or certificateless) to communicate seamlessly.  Moreover,  more efficient constructions  can enable their  adoption in  resource-limited IoTs.

In this work, we propose    new identity-based and certificateless cryptosystems that provide such  compatibility and efficiency. This  feature is beneficial for heterogeneous IoT settings  (e.g., commercial aerial drones), where different levels of trust/control is assumed on the trusted third party.    Our schemes are  more communication efficient than their public key  based  counterparts, as they do not need certificate processing. Our experimental analysis on both commodity and embedded IoT devices show that, only with the cost of having a larger system public key, our cryptosystems are more computation and communication efficient than their certificate-free counterparts. We prove the security of our schemes (in the random oracle model) and   open-source our cryptographic framework for public testing/adoption.

% \keywords{First keyword  \and Second keyword \and Another keyword.}
% 	\keywords{Identity-based cryptography \and  certificateless cryptography \and  IoT Systems \and lightweight cryptography} 
	
\end{abstract}

\begin{IEEEkeywords}
Identity-based cryptography,  certificateless cryptography,  IoT Systems, lightweight cryptography
\end{IEEEkeywords}

\vspace{-3mm}
\section{Introduction} 
%Public key cryptosystems form the foundation of scalable key management, and therefore constitute the backbone of secure communication in the Internet as well as the emerging Internet of Things (IoT) applications. 
%Public Key Infrastructure  (PKI) is necessary to ensure  the authentication of public keys to prevent man-in-the-middle attacks. This is done via certificates that are issued on user-generated public keys by the  certification authorities.
% Due to its simplicity and wide-spread use, PKI has been serving as a backbone for secure communication in the standard Internet settings. 
%However, despite its advantages and broad use,  the adoption of  PKI in some  emerging and essential  IoT applications might be highly cumbersome and costly.

Mobile and heterogeneous IoT applications harbor large quantities of resource-limited and non-stationary IoT devices, each with different  capabilities, configurations, and user domains. For instance, emerging commercial aerial drone network  protocols\footnote{\url{https://github.com/mavlink/mavlink}}   need a near real-time communication and processing over a    bandwidth-limited network. 
%Similar requirements might be observed in other heterogeneous IoT applications such as vehicular networks and energy delivery systems. 
 There are multiple hurdles of relying on   traditional PKI for  such systems: (i) The maintenance of   PKI for  such IoT networks demands a substantial infrastructure investment  \cite{ShamirID-Based}. (ii)     PKI requires transmission and verification of  certificate chains at the sender's/verifier's side.  This communication and computation overhead  could create a major bottleneck for mobile IoT devices (e.g., aerial drones \cite{DBLP:journals/corr/abs-1904-06829}) that potentially need to interact with a   number of devices. In certain cases, these certificate chains might be larger than the actual measurements/commands being transmitted and therefore, might be the dominating cost for these  applications.  \autoref{fig:pki}-a depicts a high-level illustration of traditional PKI for mobile IoT applications.

 Identity-based (IDB)  and certificateless (CL) cryptosystems offer implicit certification~\cite{PatersonCertLess,ShamirID-Based,BonehID-based}, and therefore can mitigate the aforementioned hurdles. In IDB, the user's public key  is derived from their  identifying information, and the system relies on a fully-trusted third party (TTP), called the private key generator (PKG), to issue users' private keys.  The top portion of \autoref{fig:pki}-b depicts IDB encryption, wherein the user authenticates itself to the PKG and receives a private key  corresponding to its identity $ D_1 $. The sender can use $ D_1 $ as the  public key to run encryption. IDB is potentially suitable for applications  where the system setup is done and managed by a trusted centralized entity. In CL systems~ \cite{PatersonCertLess} the trust on the TTP is   lowered by allowing the private key of the user to consist of two parts. One is computed by the user and the other is by the TTP  (called  the KGC). The bottom portion of  \autoref{fig:pki}-b outlines CL encryption, where  the user computes its  key pair and then works as in IDB to receive the other part of the private key from the KGC.  CL cryptosystems are suitable for architectures that might not assume a fully trusted third party where  the trust level on the KGC is similar to traditional certification authorities.
% They can reach to the same trust-level as in traditional certification authorities.  

 IDB and CL cryptosystems have their own merits and drawbacks, and therefore might be used in different IoT applications. Hence, it is expected that there will be different user groups who rely on IDB and CL  cryptosystems initiated in different domains/systems. For example,  Amazon's Prime Air\footnote{\url{https://www.amazon.com/Amazon-Prime-Air/b?ie=UTF8&node=8037720011}} would require drones,  under the complete control of Amazon, to interact with other drones (e.g., personal) to ensure safe operation. By employing IDB cryptography on its drones, Amazon can have complete control over the operations of its delivery drones while avoiding the overheads of traditional PKI. However, it is a strong assumption that other drones, outside Amazon's network, will adopt a similar cryptographic setting to ensure safe and secure operations. For instance, personal users rarely trust any third party to have complete control and knowledge of their drones' activity.  To the best of our knowledge,  there is a  significant research gap in enabling  a seamless communication between users who are registered under different domains (e.g.,  IDB and CL). This is a potential obstacle to  widely deploy efficient certificate-free solutions in heterogeneous environments. This limitation is mentioned in \autoref{fig:pki}-b. Moreover, it is important to further improve the computational efficiency of IDB and CL techniques to offer a low end-to-end delay that is needed by delay-aware IoT applications.

 \noindent\textbf{Our Contribution.} We propose a new series of  public key encryption, digital signature, and key exchange schemes that permit users from different domains (IDB or CL)  to communicate seamlessly.  To our knowledge,  this is the first set of certificate-free cryptosystems that achieve such    compatibility and   efficiency, and therefore a suitable alternative for resource-limited IoT systems such as commercial aerial drones. The   idea behind our    constructions is to create special key generation algorithms that harness the additive homomorphic property of the exponents  and cover-free functions to enable the users to  incorporate their private keys into the one provided by the  TTP  without falsifying it. As detailed in \autoref{sec:ProposedSchemes},  this special design is applicable across   our IDB and CL algorithms,  and therefore it permits a seamless communication between our IDB and CL cryptosystems. This strategy   also   reduces the cost of online operations and enables our schemes to achieve a lower end-to-end delay compared to their counterparts.  We elaborate on some desirable properties of our schemes as below.

 \begin{figure*} [t]
 
 	 		\caption{Proposed IDB and CL Cryptosystems and Alternatives (High-Level)}\label{fig:pki} 	
 
 	\centering
% 	\begin{minipage}[b]{.45\textwidth}
 		\includegraphics[width=0.85\linewidth]{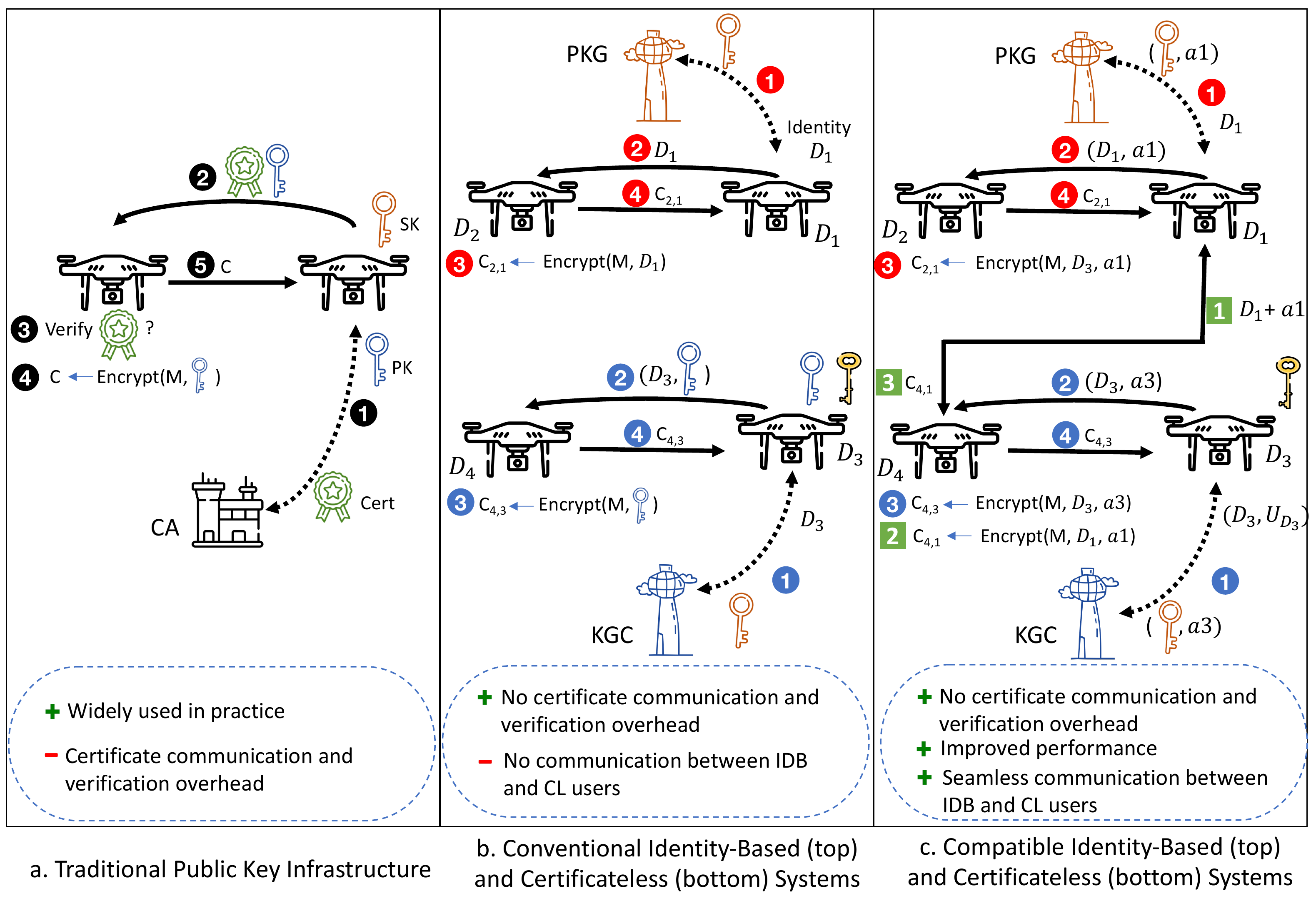} 

 	 \end{figure*}
 
\noindent  $\bullet \;$ \underline{Compatible IDB  and CL Schemes:} Fig  \ref{fig:pki}.c outlines the  concept of \emph {compatible}  IDB and CL schemes where the users from different domains (and trust-levels) can use identical encryption, signature,  and key exchange algorithms to communicate without any  additional overhead.

   \noindent$\bullet \;$ \underline{Computation \& Communication Efficiency:} Based on our analysis, new schemes offer performance advantages over their counterparts: (i) Similar to other IDB/CL cryptosystems, our schemes lift the hurdle of certificate transmission and verification, and therefore offer significant communication efficiency  over some of the most efficient PKI-based schemes. This advantage grows proportional to the size of the certificate chain. (ii) Our  schemes   outperform their certificate-free counterparts on the vast majority of the performance metrics. For instance, the end-to-end delay in our IDB/CL encryption schemes is $ \approx25$\%  lower than our most efficient counterpart in \cite{BertinoCertLessDrone}. Our signature schemes achieve up to $ \%52 $ faster end-to-end delay as compared to our counterparts. We also achieve a $ 65\% $ lower end-to-end delay for our key exchange schemes. 
   
%   Similar to other IDB and CL systems, our IDB and CL systems do not incur the communication and computation overheads to transmit and verify  certificates   as in traditional public key cryptosystems.  Moreover, we achieve a better efficiency than our most efficient counterparts in the IDB and CL setting. More specifically, the end-to-end delay in our IDB/CL encryption schemes is $ \approx25$\%  lower than our most efficient counterpart in \cite{BertinoCertLessDrone}. Our signature schemes achieve up to $ \%52 $ faster end-to-end delay as compared to our counterparts. We also achieve $ 65\% $ lower end-to-end delay for our key exchange scheme. 
    
% 	   $\bullet \;$  \underline{Provable Security:} The proposed schemes are provably secure under the hardness of standard discrete logarithmic based problems. Our proofs are in the random oracle model. For our signature schemes we utilize the concept of forking lemma \cite{Pointcheval1996}  in our proofs.
 	
 	 \noindent  $\bullet \;$  \underline{Open-Sourced   Implementation:} We   implemented our schemes on a commodity hardware and    an  8-bit AVR microprocessor, and  compared their performance with a   variety of their counterparts capturing some of the most efficient traditional PKI, IDB and CL schemes (see \autoref{sec:Performance} for details).    We  open-source our implementations for broad testing, benchmarking, and adoption purposes.

\section{Preliminaries}\label{Prelim}
\noindent \textbf{Notation.}  Given two primes $ p $ and $ q $, we define a finite field \Fq~and a group \Zp.  We work on \EC~as an elliptic curve (EC) over   \Fq, where  $ P\in \EC $ is the generator of the points on the curve.  We denote a scalar and a point on a curve with small  and capital letters, respectively.   $ x\Ra S $ denotes a random uniform selection of  $ x  $ from a set $ S $.  We define the bit-length of a variable  as $ |x| $ (i.e., $ |x| = \log_2 x $). EC scalar multiplication is denoted as $ xP $, and all EC operations use an additive notation.  Hash functions  are $ \hash_1$: $ \EC \times \EC \rightarrow \{0,1\}^\gamma$,  $ \hash_2$: $ \{0,1\}^n \times \{0,1\}^*  \rightarrow \ZZ_q, \hash_3$: $\EC \rightarrow \{0,1\}^n$  $  \hash_4$: $ \{0,1\}^n  \rightarrow \{0,1\}^n$ and $  \hash_5$: $ \{0,1\}^n   \times \EC \rightarrow \ZZ_q$, where all hash functions are random oracles~\cite{RandomOracleModel93}.  FourQ \cite{FourQ} is a special EC that is defined by the complete twisted Edwards equation~$ \E/F_{p^2} : -x^2+y^2 = 1 +dx^2y^2 $. FourQ is known to be one of the fastest elliptic curves that admits $128$-bit security level~\cite{FourQ}. Moreover, with extended twisted Edwards coordinates, FourQ offers the fastest EC addition algorithms~\cite{FourQ}, that is extensively used in our optimizations. All of our schemes are realized on FourQ. \\
\vspace{1mm}
\noindent \textbf{Definitions.}\label{subsec:Certless}\label{subsec:ID-based}
We first give our intractability assumptions followed by the definitions of identity-based and certificateless encryption and signature schemes.  
%In identity-based systems, the user's public key is derived from a public identity $ ID $  and their private key is computed by the PKG  (assumed to be fully trusted) . %by calling the $ \texttt{Extract}(\cdot) $  algorithm.  We formally define the notion of  identity-based encryption in the following definition. 

%In this section, we first define some intractable problems that we will use in our security proof. We then define the notions of identity-based and certificateless encryption and signature.  
%In identity-based systems, the public key of the user is derived from the publicly identifying information $ ID $  and their private key is computed by the PKG  (assumed to be fully trusted)  by calling the $ \texttt{Extract}(\cdot) $  algorithm.  We formally define the notion of  identity-based encryption in the following definition. 

\begin{Definition} \label{def:ECDLP}
	Given points $ P,Q \in \EC $,  the Elliptic Curve Discrete Logarithm Problem (ECDLP) asks to find $ a $,  if  it exists, such that $ aP (\bmod \,p)=Q $. 	
	\end{Definition}
\begin{Definition} \label{def: CDH}
	Given   $ P,aP,bP \in \EC $,  the   Computational Diffie-Hellman Problem (CDHP) asks to compute $ abP $. 
\end{Definition}

\begin{Definition} \label{def:ID-baasedEnc}
	An identity-based encryption scheme is  consisted of four   algorithms $\textnormal{\texttt{IBE} = \{\setup, \extract,  \enc, \dec \}}$.

	\noindent\textnormal{$\underline{	(msk, params)\gets \IDsetup(1^\kappa) }$}: Given the security parameter $ \kappa $, the PKG   selects master secret key $ msk $,    computes master public key $ mpk$ and system parameters $ params $ (an implicit  input to all the following algorithms).
	
	\noindent  \textnormal{$\underline{(sk_{ID} ,Q_{ID} )\gets \IDextract(ID,msk)} $:}  Given an identity $ ID $ and $ msk $, the PKG computes  the commitment value  $ Q_{ID}  $ and the private key $ sk_{ID} $. 
	
	\noindent  \textnormal{$\underline{c \gets \IDenc(m,ID,Q_{ID} ) }$: }Given  a message $ m $ and  $ (ID,Q_{ID}) $,   the sender computes the ciphertext $ c $.

	\noindent  \textnormal{$\underline{ m \gets \IDdec(sk_{ID},c) }$:} 	Given  the ciphertext $ c$  and the private key of the   receiver $ sk_{ID} $,   the receiver  returns either the corresponding plaintext $ m$ or $\perp$ (invalid).
\end{Definition}

\begin{Definition} \label{Def:IBS}
	An identity-based signature scheme~is  defined by four algorithms 	\textnormal{$\ibs = \{\texttt{Setup},\texttt{Extract},\texttt{Sig},\texttt{Ver}\}$}.
	
	\itemsep0.4em

	\noindent  \textnormal{$\underline{(msk, mpk, params)\leftarrow \idsgnsetup(1^{\kappa})}$:}  As in \textnormal{\IDsetup}~in   \autoref{def:ID-baasedEnc}.
	
	\noindent  \textnormal{$\underline{(sk_{ID},Q_{ID} )\gets \idsgnextract(ID,msk) }$:}  As in \textnormal{\IDextract}~in   \autoref{def:ID-baasedEnc}.

	\noindent  \textnormal{$\underline{  {\sigma\leftarrow \IBSSign(\m,\sk_{ID})}}$:} Given a message $\m $~and $ \sk_{ID} $,  returns a
	signature $\sigma$.
	
	\noindent  \textnormal{$\underline{{ d\leftarrow \IBSVerify(\m,ID,Q_{ID} ,\sigma)}}$:} Given $\m $,   $\sigma$ and  $ (ID,Q_{ID}) $ as input, if the signature is valid, it returns $d=1$, else $d=0$.% it returns a bit $ d\in \{0,1\}$.
	
\end{Definition}

%In the following, we define the notions of certificateless  encryption and signature schemes.%& along with their  security models.  
%Certificateless cryptography was proposed \cite{PatersonCertLess} to relax the trust requirement on the TTP  in  identity-based systems. This is achieved by computing the user's private key from two independent components:   one provided by the KGC  called the \emph{partial private key} and the other computed by the user called the \emph{secret value}.  

\begin{Definition} \label{def:CertlessEnc}
	A certificateless  encryption scheme is defined by six  algorithms \textnormal{$ {\cle} = \{\texttt{KGCSetup},\texttt{UserSetup},  \texttt{PartKeyGen}, \linebreak \texttt{UserKeyGen},\enc, \dec \} $}.

	\noindent  \textnormal{$\underline{	(msk, mpk, params)\gets \CLEKGCsetup(1^\kappa) }$:} Give the security parameter $ \kappa $, the KGC generates master secret key $ msk$,  master public key $ mpk $ and the system parameters $ params$ (an implicit  input to all the following algorithms).
	 
	\noindent  \textnormal{$\underline{(\alpha, U) \gets \CLEUserSetup(\cdot)} $:}  The user    $ ID $ computes her secret value $ \alpha $ and its corresponding commitment $ U.$

\noindent  \textnormal{$\underline{(w,Q_{ID}) \gets \CLEPartialKeyGen(ID,U,msk)} $:}  Given $ ID $,  $ U $,   and   $ msk $ , the KGC computes partial private key  $ w $ and its corresponding public commitment $ Q_{ID} $.
	
	\noindent  \textnormal{$\underline{x_{ID} \gets \CLEUserKeyGen(w,\alpha)} $: } Given  $(w, \alpha) $, the user $ ID $ computes   $ 	x_{ID} $. 
	
	\noindent  \textnormal{$\underline{c \gets \cleenc(m,ID, Q_{ID}) }$: }Given   $ (m,ID,Q_{ID})$,  sender computes ciphertext $ c $. 
	
	\noindent  \textnormal{$\underline{ m' \gets \cledec(x_{ID},c) }$:} Given  the ciphertext $ c$  and the private key of the   receiver  $ x_{ID} $,     the receiver  returns either corresponding plaintext $ m$ or $\perp$ (invalid).
\end{Definition}

\begin{Definition} \label{def:CertlessSig}
	A certificateless signature scheme~is  defined by six algorithms 	\normalfont{$\texttt{CLS} =  \{\texttt{KGCSetup},\texttt{UserSetup},  \texttt{PartKeyGen}, \texttt{UserKeyGen},\linebreak \texttt{Sig}, \texttt{Ver} \} $}. The definition of algorithms are as in   \autoref{def:CertlessEnc} except for $ \texttt{(CLS.Sig,CLS.Ver)}$.

	\noindent  $\underline{  {\sigma\leftarrow \sgnsig(\m,x_{ID})}}$: Given  a message $\m $, and the signer's private key $ x_{ID} $, it  returns a	signature $\sigma$.
	
	\noindent  $\underline{{  d\leftarrow \sgnver(\m,ID,Q_{ID},\sigma)}}$: Given  $ m $,   $\sigma$ and   $ (ID,Q_{ID}) $ as input, if the signature is valid, it returns $d=1$, else $d=0$.% it returns a bit $ d\in \{0,1\}$.
	
\end{Definition}
 \newcommand{\ExptCLE}{\ensuremath { \mathit{Expt}^{\textit{IND\mbox{-}CLE\mbox{-}CCA}}_{\A}}{\xspace}}

\newcommand{\ExptCLS}{\ensuremath { \mathit{Expt}^{\textit{EU\mbox{-}CLS\mbox{-}CMA}}_{\A}}{\xspace}}
   \newcommand{\oracle}{\ensuremath {\mathcal{O}}{\xspace}}
\section{Security Model}
 The security model of identity-based schemes is slightly stronger than   those for traditional PKI based schemes. More specifically,  the adversary can query for the private key of any user $ ID $, except for the target user $ ID^* $. In this paper, we constructed our schemes by following the security model of Identity-based systems proposed in \cite{BonehID-based}. In certificateless systems,  the private key of the users consists of two parts: (i) user secret key $ \alpha $, which is selected by the user, and (ii) partial private key $ w $, which is supplied to the user by the KGC.  Therefore, following \cite{PatersonCertLess}, it is natural to consider two types of adversaries for such systems. 

A Type-I adversary    $ \A_I  $ does not have access to   $ msk $  or the user's partial private key $ w $ but is able to replace any user's public key $ U $ with  public key of its choice $ U'$. However, in our security model, since we adopt the binding method \cite{PatersonCertLess}, replacing the public key will result in falsifying the partial private key (and evidently the private key). Therefore, following \cite{ManHo07}, we allow  $ \A_I  $  to query for the secret key of the user via $ \alpha \gets \oracle_{\texttt{SecKey}}(ID) $. Note that our model can also be extended to allow  $ \A_I  $ to replace the public key of the user (see \autoref{Sec:SecAnalysis}). A Type-II adversary  $ \A_{II}  $ is assumed to be a  malicious KGC.  Having knowledge on $msk $,  $ \A_{II}  $ can query  the partial private key of the user  via $ w \gets \oracle_{\texttt{PartKey}}(ID) $. Following \cite{PatersonCertLess}, we allow the adversary $ \A\in \{\A_I,\A_{II}\} $  to  extract private key of users' private keys via the $x_{ID}\gets \oracle_{\texttt{Corrupt}}(ID) $.    We note that inspired by \cite{BaekCertLessPairingFree}, many improvements on the security models of certificateless systems have been suggested  (e.g., \cite{ManHo07,BaekCertLessPairingFree,CertLessRevisited}). In this paper, we provide our proof in the original model proposed in  \cite{PatersonCertLess,BaekCertLessPairingFree}, but note that many of those stronger security requirements can be enforced if needed. We assume $ params $ to be the implicit input to all the following algorithms. 
%To consider both types of adversary, we consider $ \mathcal{O} (\cdot)  $     as  either a secret key query or partial private key query when   considering the experiment against $ \A_I $ or $ \A_{II} $, respectively. 

%\{\KGCsetup,  \UserSetup,   \PartialKeyGen,  \linebreak \UserKeyGen,\enc, \dec \} 
%Given the above oracles, in the following we define the security of a certificateless encryption scheme and signature scheme. 
\begin{Definition}\label{def:CertlessINDCCA} 
	The indistinguishability of a $  {\cle}$   under chosen ciphertext attack	(IND-CLE-CCA) experiment \ExptCLE  is defined  as follows.
	
%		The indistinguishability of a CLE under chosen ciphertext attack	(IND-CLE-CCA) experiment \ExptCLE~for  an certificateless encryption  scheme   $ {\cle}  $   is defined in the following.
%	
	\begin{itemize}\normalfont
		\item [--] \Chall~runs \CLEKGCsetup($ 1^\kappa $) and returns $ mpk  $ to \A. 
		\item[--] $ (ID^*,m_0, m_1)\leftarrow \Adv^{\oracle_{\texttt{PartKey}} ,\oracle_{\texttt{SecKey}},\oracle_{\texttt{Corrupt}},\oracle_{\texttt{Dec}}} (mpk) $
		\item [--] \Chall~picks $ b\Ra\{0,1\} $,   $ c_b \gets \cleenc(m_b,ID^*)$ and returns $ c_b  $ to $ \mathcal{A}. $
		\item [--] \A~performs the second series of queries, with a restriction of querying $ ID^*  $ or $ c_b $ to  $  \texttt{Corrupt}(\cdot)  $  or $ \cledec(\cdot) $, respectively.  Finally,  \A~ outputs a bit $b'  $.
%		\item [--] At the end of the experiment, \A~ outputs a bit $b'  $.
%		, and wins if $ b'=b $.  
		%		\item[--]  If $  1 \leftarrow {\texttt{SGN.Ver}(m^*, \sigma^*, pk)} $ and $ m^* $ was not queried to $ \texttt{SGN.Sig}(\cdot) $, return 1, else, return 0.
		
	\end{itemize}

%	 We  define $\mathcal{O} (\cdot) =\texttt{PkReplace}(\cdot) $ if   $\mathcal{A}  = \mathcal{A}_I $ and  $ \mathcal{O} (\cdot)  = \texttt{PartialKeyExtract}(\cdot) $ if  $\mathcal{A} = \mathcal{A}_{II} $. 
		\vspace{-1mm}
	\A~wins the above experiment if     $ b=b' $ and the following conditions hold: (i) 	 $ ID^* $ was never submitted to  $ \oracle_{\texttt{Corrupt}} $.  (ii) 			 If $ \A=\A_I  $, $ ID^* $ was never submitted to $ \oracle_{\texttt{PartKey}}  $. (iii)	 If $ \A=\A_{II}  $, $ ID^* $ was never submitted to $ \oracle_{\texttt{SecKey}} $.   The IND-CLE-CCA  advantage of \A~ is  $    { \Pr [b=b']  \leq \frac{1}{2} + \epsilon }$, for a negligible $ \epsilon $.
\end{Definition}
\begin{Definition}\label{def:CertlessEUCMA} 
	The existential unforgeability  under chosen message attack	(EU-CLS-CMA) experiment \ExptCLS~ for  a  certificateless signature  $ {\cls} $  is defined as follows.  
		\begin{itemize}\normalfont
		\item [--] \Chall~runs \CLSKGCsetup($ 1^\kappa $) and returns $ mpk  $  \A. 
		\item[--] $ (ID^*,m^*,\sigma^*  )\leftarrow \Adv^{\oracle_{\texttt{PartKey}} ,\oracle_{\texttt{SecKey}},\oracle_{\texttt{Corrupt}},\oracle_{\texttt{Sign}}} (mpk) $
	\end{itemize}
 	\A~wins the above experiment if      $  1 \leftarrow {\sgnver(m^*, \sigma^*, ID)} $, and the following conditions hold: (i) $ ID^* $ was never submitted to  $ \oracle_{\texttt{Corrupt}} $.  (ii)  If $ \A=\A_I  $, $ ID^* $ was never submitted to $ \oracle_{\texttt{PartKey}}  $. (iii) If $ \A=\A_{II}  $, $ ID^* $ was never submitted to $ \oracle_{\texttt{SecKey}} $. The EU-CLS-CMA advantage of $  \Adv  $  is   $\Pr[\ExptCLS= 1]$
	\end{Definition}

 \section{Proposed Schemes} \label{sec:ProposedSchemes}
%We now present our identity-based and certificateless  schemes that are compatible with each other. 

%In this section, we propose our identity-based and certificateless encryption and signature schemes.  We also show how one can   derive a identity-based key exchange which  can be trivially extended to certificateless setting as well.  Finally, we will provide a discussion on how our identity-based and certificateless schemes are compatible. 

%
%\NOTE{We are proposing cryptosystems id based and certificateless. as an example we give enc and certless what is a cryptosystem? Key exchge }
%
%
%In the following, we will discus the main idea behind both of our proposed systems. 

\subsection{Proposed Identity-Based Cryptosystem}
 Most of pairing-free IDB schemes rely on the classical signatures (e.g., \cite{Schnorr91}) in their key generation to provide implicit certification. The use of such signatures to construct IDB schemes usually require several expensive operations (e.g., scalar multiplication), and therefore may incur a non-negligible computation overhead.  To reduce this cost, we exploit the message encoding  technique and subset resilient functions (similar to  \cite{HORS_BetterthanBiBa02}) along with the exponent product of powers  property  to generate keys. This permits an improved efficiency for both the PKG and user since it only  requires a hash call and a few point additions.   
 
  Our IDB schemes use similar \IBESetup~and \IBEExtract~functions whose key steps are outlined as follows. In the  \IBESetup, the PKG selects $ t $   values $ v_i \gets \ZZ_{q}$, and computes their commitments  as $ V_i \gets v_iP \mod p $, for $ i =1,\dots, t $, it then sets the master secret key $ msk \gets (v_1,\dots,v_t) $ and the system-wide public key $ mpk\gets (V_1,\dots,V_t) $.   This is similar to the scheme in \cite{HORS_BetterthanBiBa02}, where   EC scalar multiplication is used as the  one-way function. In   \IBEExtract, the PKG picks a nonce $ \beta\gets \ZZ_q $ and computes its commitments $ Q\gets \beta P \mod p $.   The PKG then  derives indexes   $(j_1,\ldots,j_k) \as \hash_1(ID,Q)$,  which select $k$-out-of-$t$ elements from the master secret key $ v_{j_i} $ for $ i=1,\dots,k $. Note that  $ Q $ is implicitly authenticated by being included in input of $ \hash_1(\cdot) $,  this is similar to the technique used in other pairing-free identity-based and certificateless systems \cite{DBLP:conf/africacrypt/GalindoG09,BaekCertLessPairingFree}. In  Steps 3-4, unlike the scheme in \cite{HORS_BetterthanBiBa02}, where secret keys are exposed, we use the additive homomorphic property in the exponent to mask the one-time signature $y$ (Step 3) via the nonce $ \beta $ (in line with~ \cite{Tachyon,ARIS}).  The PKG will then sends $ (x,Q) $ to the user via a secure channel.

\begin{algorithm*}[t]
	
	\small
	        \begin{minipage}{.93\linewidth}
		
	%	\caption{Basic  Optimal Static Key Exchange with $k$-Compromise Limit}
	\caption{Identity-Based    Encryption} \label{alg:IBE}
 
	\begin{multicols}{2}
		\begin{algorithmic}[1]
			\Statex\underline{$(msk,params)\gets $ \IBESetup$ (1^\kappa) $}:  
			\State Select primes $ p $ and $ q $  and $(t,k) \in \mathbb{N}$  where $ t>>k  $. 
			\For{$i = 1, \ldots, t$}
			
			\State  $v_i \Rq,~V_i \as {v_i}P \bmod p $
			\EndFor 
			
			\State \Return  $ msk \gets (v_1,\dots,v_t) $,  $ mpk \gets  (V_1,\dots,V_t)  $ and $ params \gets (\hash_1,\hash_2,\hash_3,\hash_4,p,q, k,t,mpk ) $
		\end{algorithmic}
		\algrule
		%	\vspace{5mm}
		
		\begin{algorithmic}[1]
			\Statex \underline{$ (w, Q) \gets $\IBEExtract$ (ID,U,msk) $}:  
			
			\State $ \beta \Ra \ZZ_q$, $ Q \gets \beta P  \bmod p$
			%		\State $ Q= U+W \bmod p$
			
			\State  $(j_1,\ldots,j_k) \as \hash_1(ID,Q)$ where for all $i = 1, \ldots, t$,  $ 1 < j_i <  |t|$
			
			\State  $y \as \sum_{i=1}^{k}v_{j_i} \bmod q$
			%		\State $Y \as \sum_{i=1}^{k}V_{j_i}  \bmod p$
			
			\State $x \gets y+\beta \bmod q $
			\State \Return $      (x,Q)$ 
		\end{algorithmic}

		\begin{algorithmic}[1]
			\Statex $\underline{c \leftarrow  \IBEEnc(m,ID_a, Q_a)}$:  Bob encrypts message $m \in \{0,1\}^n$.%Bob $(ID_b, Q_b)$ encrypts message $m \in \{0,1\}^n$  (for some $ n\in \mathbb{N} $) using the public key of Alice  $(ID_a, Q_a)$ as follows
			\vspace{3pt} 
			%	\State $r\xleftarrow{ \$ } \ZZ_{q} $, $R \leftarrow {r_b} P\bmod p $
			\State $\sigma \xleftarrow{ \$ }  \{0,1\}^n$, $r \leftarrow \hash_2(\sigma,m ),  R \leftarrow {r} P\bmod p$
			
			%		\State $C_a \leftarrow M_a + Q_a \bmod p $
			\State $(j_1,\ldots,j_k) \as \hash_1(ID_a,Q_a)$, $Y_a \as \sum_{i=1}^{k}V_{j_i}  \bmod p $
 
			\State   $u\gets \hash_3 (r(Y_a+Q_a)\bmod p) \oplus \sigma $,  $v\gets \hash_4 (\sigma ) \oplus m $  
			
			%	\State $d \leftarrow \texttt{MAC}_{k_{MAC}}(c)$
			\State \Return  $ c =(R,u,v) $ 
		\end{algorithmic} 
		\algrule
		% 	\vspace{2mm}
		
		\begin{algorithmic}[1]
			\Statex $\underline{m\leftarrow  \IBEDec(x_a,  c)}$: Alice decrypts the ciphertext $ c $.
			\vspace{3pt}

			\State   $\sigma' \gets  \hash_3 (x_aR\bmod p) \oplus u$
			\State  $m' \gets  v \oplus \hash_4(\sigma') $,$ r'\gets \hash_2(\sigma',m) $
			\If { $ r'P \pmod p =R $}\EndIf \Return  $ m' $
			\State \textbf{else} \Return $ \perp $
%			\State \textbf{else} \Return $ \perp $
		\end{algorithmic}
	\end{multicols}
     \end{minipage}
\end{algorithm*}

%=============================================== IBE ENDDD

%==============================================================IBS   
%\vspace{-1mm}
\begin{algorithm*}[t!]
\small
	\begin{minipage}{.93\linewidth}
		%	\caption{Basic  Optimal Static Key Exchange with $k$-Compromise Limit}
		\caption{Identity-Based    Signature} \label{alg:IBS}
		\vspace{-3mm}
		\begin{multicols}{2}
			\begin{algorithmic}[1]
				\Statex\underline{$(msk,params)\gets \IBSSetup$  $ (1^\kappa) $}:  Description identical to $ \IBESetup $ in Algorithm \ref{alg:IBE}, except that only the description of $\hash_1 $ and $ \hash_5 $ is included in $ params $.
				%			\State Select primes $ p $ and $ q $  and $(k,t)$  as in Definition 3.2.
				%			\For{$i = 1, \ldots, t$}
				%			
				%			\State  $v_i \Rq,~V_i \as {v_i}P \bmod p $
				%			\EndFor 
				%			
				%			\State \Return  $ msk \gets (v_1,\dots,v_t) $,  $ mpk \gets  (V_1,\dots,V_t)  $ and $ params \gets (\hash_1,\hash_2,\hash_3,\hash_4,p,q, k,t,mpk ) $
			\end{algorithmic}
			
			\algrule
			
			\begin{algorithmic}[1]
				\Statex \underline{$ (w, Q) \gets \IBSExtract $$ (ID,U,msk) $}:  As in  $ \IBEExtract $ in Algorithm \ref{alg:IBE}.
				
				%			\State $ \beta \Ra \ZZ_q$, $ Q \gets \beta P  \bmod p$
				%			%		\State $ Q= U+W \bmod p$
				%			
				%			\State  $(j_1,\ldots,j_k) \as \hash_1(ID,Q)$ where for all $i = 1, \ldots, t$,  $ 1 < j_i <  |t|$
				%			
				%			\State  $y \as \sum_{i=1}^{k}v_{j_i} \bmod q$
				%			%		\State $Y \as \sum_{i=1}^{k}V_{j_i}  \bmod p$
				%			
				%			\State $w \gets y+\beta \bmod q $
				%			\State \Return $      (w,Q)$ 
			\end{algorithmic}

			\algrule
			%	\begin{algorithmic}[1]
			%		\Statex\underline{$ x\gets $ \CLEUserKeyGen$ (params,\alpha,w) $}:  
			%		%			\State  $ \nu \gets \ZZ_{q} $
			%		%			\State $ Q\gets \nu P+  T \bmod p $
			%		\State $(j_1,\ldots,j_k) \as \hash_1(ID,Q)$, $Y \as \sum_{i=1}^{k}V_{j_i}  \bmod p $
			%		\State $ W' \gets Q-U\mod p  $, $ W'':=wP-Y \bmod p  $
			%		\If { $ W'=W'' $}      \Return   $ x \gets w+\alpha   \mod q$ \EndIf     
			%		\State 	\textbf{else}   \Return   $ \perp$
			%
			%	\end{algorithmic}
			%	\algrule
			\begin{algorithmic}[1]
				\Statex $\underline{(s,e)\leftarrow  \IBSSign(m,x_a)}$: Alice $ ID_a $ signs message $m$. 
				\vspace{3pt}
				\State $r \xleftarrow{ \$ } \ZZ_{q}$, $R \leftarrow {r} P\bmod p $
				\State $e \leftarrow \hash_5(m,R)$
				\State $s \leftarrow r- e \cdot x_a \bmod q$
				\State \Return $(s,e)$
			\end{algorithmic}
			
			\algrule
			\begin{algorithmic}[1]
				\Statex $\underline{\{0,1\}\leftarrow  \IBSVerify(m, ID_a,Q_a,\langle s,e \rangle)}$: Bob verifies the signature  $(s,e)$.
				% 				\vspace{3pt}
				\State $(j_1,\ldots,j_k) \as \hash_1(ID_a,Q_a) $
				\State $Y_a \as \sum_{i=1}^{k}V_{j_i}  \bmod p  $
				\State $R' \as sP + e(Y_a + Q_a) \bmod p$
				\If{$e =\hash_5(m,R')$} \Return 1  
				\Else~\Return 0 
				\EndIf
				
			\end{algorithmic}
			
			\columnbreak
		\end{multicols}
	\end{minipage}
\end{algorithm*}\setlength{\textfloatsep}{2pt}

\noindent\textbf{Identity-Based Encryption Scheme:} In  \IBEEnc~(\autoref{alg:IBE},  Step 2),   the indexes obtained from $ \hash_1 $ are used to retrieve the components $ V_{j_i}  $ from the system-wide public key $ mpk $. The input of $ \hash_3 $ is the ephemeral key, which  given the ciphertext $ c =(R,u,v) $, can be recomputed by the receiver  in the \IBEDec~algorithm.   $ \sigma $ and $ r $ are computed in-line with the transformation proposed in \cite{DBLP:conf/crypto/FujisakiO99}.

%The \IBEEnc~algorithm in \autoref{alg:IBE} is obtained by  making use of the method proposed in  \cite{DBLP:conf/crypto/FujisakiO99} and therefore, shares many similarities in how $ \sigma $ and $ r $ are computed. In Step 2,  the indexes obtained from $ \hash_1 $ are used to retrieve the components $ V_{j_i}  $ from the system-wide public key $ mpk $. Note that the input of $ \hash_3 $ is the ephemeral key which  given the ciphertext $ c =(R,u,v) $, can be recomputed by the receiver  in the \IBEDec~algorithm.

\noindent\textbf{Identity-Based Signature Scheme:}    In   \IBSVerify, the public key of the user $Y_a$ is computed from $V_{j_i} \in mpk$ via the indexes retrieved from the output of $ \hash_1 $. The key generation is as in \autoref{alg:IBE}. The rest of the signing and verification steps are akin to   Schnorr signatures~\cite{Schnorr91}. 
%
%The signer computes a nonce $ r $ and computes its commitment, which is also included as an input to $\hash_5 $. The nonce will then be used to mask the product of the private key and the hash of the message.  In   \IBSVerify, the public of the user is computed by using the indexes retrieved from the output of $ \hash_1 $ and adding the corresponding elements in $ mpk $, i.e., $ V_{j_i} $. The verifier   outputs $ 1  $ (accept) if the recomputed $ R' $ (in Step 3), matches the one computed by the signer (Step 4). 

 \noindent\textbf{Identity-Based Key Exchange Scheme:}   For the key exchange scheme, we run $ \IBESetup $ and then let both parties, Alice and Bob,  obtain   $ (x_A,Q_A )$ and $ (x_B,Q_B) $ via the \IBEExtract~algorithm, respectively. Alice then picks $ z_A \Ra \ZZ_{q}  $, computes its commitment $  M_A\gets z_A P\bmod p    $, and sends $ (M_A,Q_A) $ to Bob. Bob does the same and sends  $(M_B,Q_B) $ to Alice. Alice then computes $(j_1,\ldots,j_k) \as \hash_1(ID_b,Q_b)$ and  	$Y_b \as \sum_{i=1}^{k}V_{j_i}  \bmod p $  and outputs the shared secret key as  $K_a \leftarrow   x_a(Y_b+Q_b)+ z_aM_b \bmod p$. Bob   works similarly, and outputs the shared key as $K_b \leftarrow x_b(Y_a+Q_a)+z_bM_a\bmod p$.

\begin{algorithm*}[h]

	%	\caption{Basic  Optimal Static Key Exchange with $k$-Compromise Limit}
	\caption{Certificateless    Encryption} \label{alg:CLENC}
	\small
		  \begin{minipage}{.93\linewidth}
		  \begin{multicols}{2}
	\begin{algorithmic}[1]
		\Statex\underline{$(msk,params)\gets $ \CLEKGCsetup$ (1^\kappa) $}:    As in  $ \IBESetup $  in Alg. \ref{alg:IBE}. 
%		\State Select primes $ p $ and $ q $  and $(t,k) \in \mathbb{N}$  where $ t>>k  $. 
%		\For{$i = 1, \ldots, t$}
% 
%		\State  $v_i \Rq,~V_i \as {v_i}P \bmod p $
%		\EndFor 
%		
%		\State \Return  $ msk \gets (v_1,\dots,v_t) $,  $ mpk \gets  (V_1,\dots,V_t)  $ and $ params \gets (\hash_1,\hash_2,\hash_3,\hash_4,p,q, k,t,mpk ) $
	\end{algorithmic}

 	\algrule
	\begin{algorithmic}[1]
		\Statex\underline{$ (\alpha,U)\gets $\CLEUserSetup$ (\cdot) $}:  %This algorithm is  run by the user $ ID $. 
		\State  $ \alpha \Ra \ZZ_{q} ,U \gets \alpha P \bmod p $
		\State \Return  $ (\alpha,  U )$
	\end{algorithmic}
 
	\algrule

	\begin{algorithmic}[1]
		\Statex \underline{$ (w, Q) \gets $\CLEPartialKeyGen$ (ID,U,msk) $}:  %After user $ ID $ authenticates itself to the KGC and provides it with $ (ID,U) $, KGC executes: %this algorithm is run by the KGC.
 
		\State $ \beta \Ra \ZZ_q$, $ W \gets \beta P  \bmod p$
		\State $ Q= U+W \bmod p$
		
		\State  $(j_1,\ldots,j_k) \as \hash_1(ID,Q)$ where for all $i = 1, \ldots, t$,  $ 1 < j_i <  |t|$
 
		\State  $y \as \sum_{i=1}^{k}v_{j_i} \bmod q$
		%		\State $Y \as \sum_{i=1}^{k}V_{j_i}  \bmod p$
		
		\State $w \gets y+\beta \bmod q $
		\State \Return $      (w,Q)$ 
	\end{algorithmic}
 
 \columnbreak
 
	\begin{algorithmic}[1]
		 \Statex\underline{$ x\gets $ \CLEUserKeyGen$ (w,\alpha) $}:
		%			\State  $ \nu \gets \ZZ_{q} $
		%			\State $ Q\gets \nu P+  T \bmod p $
		\State $(j_1,\ldots,j_k) \as \hash_1(ID,Q)$, $Y \as \sum_{i=1}^{k}V_{j_i}  \bmod p $
		\State $ W' \gets Q-U\mod p  $, $ W'':=wP-Y \bmod p  $
		 \If { $ W'=W'' $}      \Return   $ x \gets w+\alpha   \mod q$  \textbf{else}   \Return   $ \perp$\EndIf     
%	\State 	\textbf{else}   \Return   $ \perp$

	\end{algorithmic}	    
\algrule

\begin{algorithmic}[1]
	\Statex $\underline{c \leftarrow  \cleenc(m,ID_a, Q_a)}$: Bob encrypts message $m \in \{0,1\}^n$.
%	\vspace{3pt} 
%	\State $r\xleftarrow{ \$ } \ZZ_{q} $, $R \leftarrow {r_b} P\bmod p $
	\State $\sigma \xleftarrow{ \$ }  \{0,1\}^n$, $r \leftarrow \hash_2(\sigma,m ),  R \leftarrow {r} P\bmod p$
	
	%		\State $C_a \leftarrow M_a + Q_a \bmod p $
	\State $(j_1,\ldots,j_k) \as \hash_1(ID_a,Q_a)$, $Y_a \as \sum_{i=1}^{k}V_{j_i}  \bmod p $
 
%	\State  $S_{ab} \leftarrow   r_b(Y_a+Q_a) \bmod P$ \RB{This can directly be used in the next line }
%	\State $(k_{enc}, k_{MAC}) \leftarrow \texttt{KDF}(\mathbf{S})$
	\State   $u\gets \hash_3 (r(Y_a+Q_a)\bmod p) \oplus \sigma $,  $v\gets \hash_4 (\sigma ) \oplus m $  
 
%	\State $d \leftarrow \texttt{MAC}_{k_{MAC}}(c)$
 	\State \Return  $ c =(R,u,v) $ 
\end{algorithmic}
\algrule

\begin{algorithmic}[1]
	\Statex $\underline{m\leftarrow  \cledec(x_a,  c)}$:   Alice decrypts the ciphertext $ c $.  
%	\vspace{3pt}
 
%	\State $S'_{ab}\leftarrow  x_aR_b\bmod p$
%	\State $(k_{enc}, k_{MAC}) \leftarrow \texttt{KDF}(SS')$
% 	\State If $d \neq  \texttt{MAC}_{k_{MAC}}(c)$ return {\em invalid}
%	\State $m' \as S'_{ab} +c$
	\State   $\sigma' \gets  \hash_3 (x_aR\bmod p) \oplus u$ 
	\State  $m' \gets  v \oplus \hash_4(\sigma') $,$ r'\gets \hash_2(\sigma',m) $
		 \If { $ r'P \pmod p =R $} \EndIf \Return  $ m' $
	\State \textbf{else} \Return $ \perp $
\end{algorithmic}
\end{multicols}
\end{minipage}
\end{algorithm*} 
 
%========3========
 
%==============================================================CLS  Begin
\begin{algorithm*}[t!]
 		\caption{Certificateless Digital Signature}\label{alg:CLS}
	\small
		  \begin{minipage}{.93\linewidth}

	%	\hspace{5pt}
 
		\begin{multicols}{2}
			
			\begin{algorithmic}[1]
				\Statex\underline{$(msk,params)\gets $ \CLSKGCsetup$ (1^\kappa) $}: As in $ \CLEKGCsetup $ in Alg. \ref{alg:CLENC}, except that  $\hash_1 $ and $ \hash_5 $ are in $ params $.   
				%		\State Select primes $ p $ and $ q $  and $(k,t)$  as in Definition 3.2.
				%		\For{$i = 1, \ldots, t$}
				%		\State  $v_i \Rq,~V_i \as {v_i}P \bmod p $
				%		\EndFor 
				%		\State \Return  $ msk \gets (v_1,\dots,v_t) $,  $ mpk \gets  (V_1,\dots,V_t)  $ and $ params \gets (p,q, k,t,mpk ) $
			\end{algorithmic}
			\algrule
			
			\begin{algorithmic}[1]
				\Statex\underline{$ (\alpha,U)\gets $\CLSUserSetup$ (params) $}:  As in \linebreak  $ \CLEUserSetup $ in Alg. \ref{alg:CLENC}.
				%		\State  $ \alpha \Ra \ZZ_{q} ,U \gets \alpha P \bmod p $
				%		\State \Return  $ (\alpha,  U )$
			\end{algorithmic}
			\algrule

			\begin{algorithmic}[1]
				\Statex \underline{$ (w, Q) \gets $\CLSPartialKeyGen$ (ID,U,msk) $}:  As in  $ \CLEPartialKeyGen $ in Alg. \ref{alg:CLENC}.
				%		\vspace{3pt}
				%		\State $ \beta \Ra \ZZ_q$, $ W \gets \beta P  \bmod p$
				%		\State $ Q\gets  U+W \bmod p$
				%		
				%		\State  $(j_1,\ldots,j_k) \as \hash_1(ID,Q)$ where for all $i = 1, \ldots, t$,  $ 1 < j_i <  |t|$
				%		\State  $y \as \sum_{i=1}^{k}v_{j_i} \bmod q$
				%		%		\State $Y \as \sum_{i=1}^{k}V_{j_i}  \bmod p$
				%		
				%		\State $w \gets y+\beta \bmod q $
				%		\State \Return $      (w,Q)$ 
			\end{algorithmic}
			
			\algrule

			\begin{algorithmic}[1]
				\Statex\underline{$ x\gets $ \CLSUserKeyGen$ (params,\alpha,w) $}: As in  $ \CLEUserKeyGen $ in Alg. \ref{alg:CLENC}.
				%%		%			\State  $ \nu \gets \ZZ_{q} $
				%%		%			\State $ Q\gets \nu P+  T \bmod p $
				% 	    \State $(j_1,\ldots,j_k) \as \hash_1(ID,Q)$, $Y \as \sum_{i=1}^{k}V_{j_i}  \bmod p $
				% 		 \State $ W' \gets Q-U\mod p  $, $ W'\gets wP-Y \bmod p  $
				% 		\If { $ W'=W'' $}      \Return   $ x \gets w+\alpha   \mod q$ \EndIf     
				% 		\State 	\textbf{else}   \Return   $ \perp$
				%		

			\end{algorithmic}

			%	\begin{algorithmic}[1]
			%		\Statex $\underline{(x,Q) \leftarrow \texttt{KeyGen}(1^{\kappa})}$:  %Given $1^{\kappa}$ as the input
			%		\vspace{3pt}
			%		\State $(msk, params)\gets  $ \syssetup $ (1^\kappa) $
			%		\State $ (\alpha,U)\gets $ \usrsetup$ (params) $
			%		\State $ (w,Q)\gets $\extract$ (params,ID,U) $
			%		\State $ x\gets $ \keygen$ (params,\alpha,w) $
			%		\State \Return $      (x,Q)$ 
			%	
	\algrule
			
			\begin{algorithmic}[1]
				\Statex $\underline{(s,e)\leftarrow  \CLSSign(m,x_a)}$: Alice $ ID_a $ signs message $m$.  %Alice $(ID_a,Q_a)$ signs message $m\in  \{0,1\}^*$ using its private key:
	%			\vspace{3pt}
				\State $r \xleftarrow{ \$ } \ZZ_{q}$, $R \leftarrow {r} P\bmod p $
				\State $e \leftarrow \hash_5(m,R)$
				\State $s \leftarrow r- e \cdot x_a \bmod q$
				\State \Return $(s,e)$
			\end{algorithmic}
			\algrule
			
			\begin{algorithmic}[1]
				\Statex $\underline{\{0,1\}\leftarrow  \CLSVerify(m, Q_a,\langle s,e \rangle)}$:  Bob verifies the signature  $(s,e)$.
%				\vspace{3pt}
				\State $(j_1,\ldots,j_k) \as \hash_1(ID_a,Q_a) $
				\State $Y_a \as \sum_{i=1}^{k}V_{j_i}  \bmod p  $
				\State $R' \as sP + e(Y_a + Q_a) \bmod p$
				\If{$e =\hash_5(m,R')$} \Return 1  
				\Else~\Return 0 
				\EndIf
				
			\end{algorithmic}
		\end{multicols}
	\end{minipage}
\end{algorithm*} 
%==============================================================CLS  END
 
\subsection{Proposed Certificateless Cryptosystem}
 For our CL schemes to achieve the same trust level (Level 3) \cite{SelfCertGIRAULTOriginal91} on the third party (KGC), as in traditional PKI, we use the binding method \cite{PatersonCertLess} in the \CLEPartialKeyGen~and \CLSPartialKeyGen~algorithms. Note that the same  secure channel which is used for user authentication  (e.g., SSL/TLS), can be used to send  the user commitment  $ U $  to the KGC.  This permits an implicit certification of  $ U, $ and therefore any changes  of  $ U, $ will falsify the private key.   

The \CLEKGCsetup~algorithm is as in \IBESetup~in \autoref{alg:IBE}.  The \linebreak \CLEPartialKeyGen~algorithm is similar to the \IBEExtract~in \autoref{alg:IBE}, with the difference that the user commitment $ U $ is used to compute $ Q $. In  \CLEUserKeyGen,     the correctness  of the partial private key is checked first before the  private key $ x $ is computed.    

%} . However, 
%
%In our certificateless schemes, to achieve the same trust level (Level 3) \cite{SelfCertGIRAULTOriginal91} on the third party (KGC) as in traditional PKI, we use the binding method. As aforementioned, using this  method also allows us to achieve a more relaxed security model. 
%
%We now outline the main idea behind our certificateless cryptosystems. The \CLEKGCsetup~algorithm is identical to   \IBESetup~in \autoref{alg:IBE}. In our certificateless schemes, to achieve the same trust level (Level 3) \cite{SelfCertGIRAULTOriginal91} on the third party (KGC) as in traditional PKI, we use the binding method. As aforementioned, using this  method also allows us to achieve a more relaxed security model. 
%
%
%In our certificateless schemes, we use the binding method \cite{PatersonCertLess} in the \linebreak \CLEPartialKeyGen~and \CLSPartialKeyGen~algorithms.  Note that in order for the user to authenticate itself to the KGC, it needs to establish a secure channel. We believe that using the same secure channel, the user can send the commitment of its secret key (i.e., $ U $) to the KGC.  Note that using this technique, the commitment value is implicitly certified and any changes to this value will completely falsify the partial private key and private key.   

\noindent\textbf{Certificateless  Encryption Scheme:}  Note that the \cleenc~and \cledec~algorithms are identical to \IBEEnc~and \IBEDec~algorithms in \autoref{alg:IBE}. 

 \noindent\textbf{Certificateless  Signature  Scheme:} The setup and key generation algorithms are as in \autoref{alg:CLENC}, and the \CLSSign~and \CLSVerify~algorithms are as in \IBSSign~and \IBSVerify~in \autoref{alg:IBS}, respectively. 
 
\noindent\textbf{Certificateless  Key Exchange Scheme:} Given the compatibility of our IDB and CL schemes, after the initial algorithms (system setup and key generation) take place as in \autoref{alg:CLENC}, the CL key exchange will be identical to the one proposed in  the identity-based key exchange scheme above. 

  \subsection{Compatibility of Identity-Based and Certificateless Schemes}

 In our CL schemes, we utilize the additive homomorphic property of the exponents  (i.e., $ w $) when the KGC includes the addition of commitments ($ W  $and $ U $) in the $ \hash_1 $. After receiving  $ w $, the user exploits the  homomorphic property  to modify the key without falsifying it and obtain $ x $. For instance, we observed that our counterparts (e.g.,  \cite{BonehID-based,PatersonCertLess}) do not  offer such a compatibility, since the partial private key is the KGC's commitment to the  (hash of)  user  identity, without a homomorphic property. Moreover, the KGC does not output any auxiliary value to incorporate the user commitment  with it. 

As shown above, our IDB and CL schemes are compatible,  thanks to the special design of their key generation algorithms  (i.e., $ \texttt{Extract} $ in IDB, $ \texttt{UserSetup} $ and $ \texttt{PartKeyGen} $ in CL).  Therefore, after the users computed/obtained their keys from the third party, the interface of the main cryptographic functions (e.g., encrypt, decrypt, sign, etc.) are identical in both systems, therefore, the users can communicate with uses in different domains seamlessly. For instance,  ciphertext $ c=(R,u,v) $ outputted by the  \cleenc~in \autoref{alg:IBE},  can be decrypted by a user in the identity-based setting by  the \IBEDec~algorithm in \autoref{alg:IBE}. This also applies to the signature and the key exchange schemes proposed above.

 \section{Security Analysis}\label{Sec:SecAnalysis}
 \vspace{-1mm}
 
%  \RB{Mention that given the sec of certless, the sec of IDbased is trivial\\ need to consider the decline of decryption query for users which pk is replaced\\ }
%\subsubsection{Two Public Key Encryption Schemes:}
\newcommand{\hybrid}{\ensuremath {\texttt{HybridPub}}{\xspace}}
\newcommand{\basic}{\ensuremath {\texttt{BasicPub}}{\xspace}}
\newcommand{\KeyGen}{\ensuremath {\texttt{KeyGen}}{\xspace}}
\newcommand{\Encrypt}{\ensuremath {\texttt{Encrypt}}{\xspace}}
\newcommand{\Decrypt}{\ensuremath {\texttt{Decrypt}}{\xspace}}
\newcommand{\OWE}{\ensuremath {\text{OWE}}{\xspace}}
\newcommand{\TI}{{{Type-I~}}{\xspace}}

\newcommand{\TII}{{{Type-II~}}{\xspace}}

\newcommand{\B}{\ensuremath {\mathcal{B}}{\xspace}}

\newcommand{\ChallI}{\ensuremath {\Chall_I}{\xspace}}
\newcommand{\ChallII}{\ensuremath {\Chall_{II}}{\xspace}}

\newcommand{\event}{\ensuremath {\small\texttt{E}_1}{\xspace}}
 \newcommand{\eventhashII}{\ensuremath{\small\texttt{E}_{\texttt{H}_2}}{\xspace}}
  \newcommand{\eventhashIII}{\ensuremath{\small\texttt{E}_{\texttt{H}_3}}{\xspace}}
    \newcommand{\eventDec}{\ensuremath{\small\texttt{Dec}_{\texttt{err}}}{\xspace}}

	\begin{theorem}\label{thm:IND-CLE-CCA}
		If an adversary  $ \A_I $ can break the IND-CLE-CCA security of the encryption scheme proposed in Algorithm \ref{alg:CLENC} after $  q_{\hash_i}$ queries to random oracles $ \hash_i $ for $ i \in \{1,2,3,4\}  $,    $q_{D} $ queries to the decryption oracle and $ q_{\sk} $ to the private key extraction oracle with probability $ \epsilon $. There exists another algorithm \Chall~ that runs $ \A_I $ as subroutine and breaks a random instance of the CDH problem ($ P,aP,bP $) with probability $ \epsilon' $ where: $ \epsilon' > \frac{1}{q_{\hash_3}}\big(\frac{2\epsilon}{\mathbf{e}(q_{\sk}+1)} - \frac{q_{\hash_2}}{2^n}-\frac{ q_D(q_{\hash_2}+1)}{2^n} -\frac{ 2q_D}{p}\big) $.
	\end{theorem}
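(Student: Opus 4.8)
The plan is to turn an IND-CLE-CCA attacker into a CDH solver by combining the Boneh--Franklin partitioning technique \cite{BonehID-based} with the Fujisaki--Okamoto decryption simulation \cite{DBLP:conf/crypto/FujisakiO99}, specialized to the key layout of \autoref{alg:CLENC}. Since a Type-I attacker does not hold $msk$, the reduction $\Chall$ simply plays an honest KGC: it runs \CLEKGCsetup faithfully, keeps $msk=(v_1,\dots,v_t)$, and can therefore compute $Y_{ID}=\sum_i V_{j_i}$ together with its discrete logarithm for every $ID$ and commitment; in particular the subset-resilience of $\hash_1$ plays no role here. Given a CDH instance $(P,aP,bP)$, $\Chall$ assigns each registered identity a biased coin $c_{ID}\in\{0,1\}$ with $\Pr[c_{ID}=0]=\delta$. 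If $c_{ID}=0$, it builds all of the user's material honestly ($\alpha_{ID},\beta_{ID}\Ra\ZZ_q$; $U_{ID}=\alpha_{ID}P$; $W_{ID}=\beta_{ID}P$; $Q_{ID}=U_{ID}+W_{ID}$; $\hash_1(ID,Q_{ID})$ programmed to a fresh index set; $w_{ID}=y_{ID}+\beta_{ID}$, $x_{ID}=w_{ID}+\alpha_{ID}$), so it answers every $\oracle_{\texttt{PartKey}},\oracle_{\texttt{SecKey}},\oracle_{\texttt{Corrupt}},\oracle_{\texttt{Dec}}$ query on $ID$. If $c_{ID}=1$, $\Chall$ embeds the instance into the \emph{partial-key} public value --- the correct place, because a Type-I adversary may learn $\alpha_{ID^*}$ but never $w_{ID^*}$ or $x_{ID^*}$: it picks $\alpha_{ID},\gamma_{ID}\Ra\ZZ_q$ and a fresh index set $(j_1,\dots,j_k)$, sets $Y_{ID}=\sum_i V_{j_i}$, $W_{ID}=aP+\gamma_{ID}P-Y_{ID}$, $Q_{ID}=\alpha_{ID}P+W_{ID}$, and programs $\hash_1(ID,Q_{ID}):=(j_1,\dots,j_k)$; then the effective public key $Y_{ID}+Q_{ID}=aP+(\gamma_{ID}+\alpha_{ID})P$ is distributed exactly as in the real game (the fresh per-identity shift $\gamma_{ID}$ prevents any two embedded users from sharing a partial public key), while $x_{ID}$ has unknown part $\log_P(aP)$. $\Chall$ can still serve $\oracle_{\texttt{SecKey}}$ on such an identity but aborts if it is queried to $\oracle_{\texttt{PartKey}}$ or $\oracle_{\texttt{Corrupt}}$, which is exactly the game's restriction on $ID^*$.

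Hash queries are served by lazy sampling. When $\A_I$ submits $(ID^*,m_0,m_1)$, $\Chall$ aborts unless $c_{ID^*}=1$; otherwise it returns $c^*=(R^*,u^*,v^*)$ with $R^*:=bP$ and $u^*,v^*\Ra\{0,1\}^n$. This implicitly fixes $r^*=b$ and pins the ``correct'' argument of $\hash_3$ inside decryption to $r^*(Y_{ID^*}+Q_{ID^*})=abP+(\gamma_{ID^*}+\alpha_{ID^*})(bP)$ --- a point $\Chall$ cannot compute but can recognize after subtracting the known shift $(\gamma_{ID^*}+\alpha_{ID^*})(bP)$. Decryption queries $\oracle_{\texttt{Dec}}(ID,(R,u,v))$ with $c_{ID}=0$ are answered with $x_{ID}$; when $c_{ID}=1$, $\Chall$ runs the FO simulation: it scans the $\hash_2$-list for $(\sigma,m)$ with $r:=\hash_2(\sigma,m)$ and $rP=R$, then checks whether $\hash_3$ was queried at $r(Y_{ID}+Q_{ID})$ (a known point, hence computable) with reply $h$ such that $u=h\oplus\sigma$, and if a consistent pair exists returns $v\oplus\hash_4(\sigma)$, else $\perp$. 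This simulation is perfect unless $\A_I$ submits a valid ciphertext without having issued the matching $\hash_2$/$\hash_3$ queries, or the implicit $\sigma^*$ of $c^*$ coincides with an earlier query; a routine union bound over the $q_D$ decryption calls and $q_{\hash_2}$ hash calls caps the total loss at $\frac{q_{\hash_2}}{2^n}+\frac{q_D(q_{\hash_2}+1)}{2^n}+\frac{2q_D}{p}$.

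To conclude, the standard hashed-ElGamal/FO argument shows that unless $\A_I$ queries $\hash_3$ at the special point, $c^*$ is information-theoretically independent of $b$ and the advantage would vanish; hence, conditioned on $\Chall$ not aborting and on a perfect decryption simulation, this query is made with probability at least $2\epsilon$. When $\A_I$ halts, $\Chall$ picks one of the at most $q_{\hash_3}$ queried $\hash_3$-points $T$ uniformly at random and outputs $T-(\gamma_{ID^*}+\alpha_{ID^*})(bP)$, which equals $abP$ whenever $T$ was the special point. Finally, the classical coin-tossing optimization $\delta=\frac{q_{\sk}}{q_{\sk}+1}$ lets $\Chall$ avoid aborting on any $\oracle_{\texttt{PartKey}}/\oracle_{\texttt{Corrupt}}$ query while still having $c_{ID^*}=1$ with probability at least $\frac{1}{\mathbf{e}(q_{\sk}+1)}$; multiplying the three factors and subtracting the simulation losses yields $\epsilon' > \frac{1}{q_{\hash_3}}\big(\frac{2\epsilon}{\mathbf{e}(q_{\sk}+1)} - \frac{q_{\hash_2}}{2^n} - \frac{q_D(q_{\hash_2}+1)}{2^n} - \frac{2q_D}{p}\big)$. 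I expect the decryption-oracle simulation for the target identity to be the delicate step --- keeping the FO extraction sound without the private key while tracking the failure probabilities tightly enough to land exactly these $2^{-n}$- and $p^{-1}$-terms; the partitioning and the CDH extraction are then routine.
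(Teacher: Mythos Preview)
Your proposal is correct and follows essentially the same approach as the paper: the reduction knows $msk$, partitions identities via a biased coin (the paper calls it $\zeta$), plants $aP$ in the target's public commitment and $bP$ as $R^*$, simulates decryption via the Fujisaki--Okamoto list search, and extracts $abP$ from a random $\hash_3$ query after subtracting a known shift, yielding the identical bound. The only cosmetic difference is that the paper sets $Q_{ID^*}\gets aP$ directly (so the shift is $y_{ID^*}R^*$, since $\Chall$ knows $y_{ID^*}=\sum v_{j_i}$), whereas you add a per-identity randomizer $\gamma_{ID}$ to make the embedded $Q$'s individually uniform; your version is slightly more careful about distribution matching, but the argument is the same.
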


\begin{proof}
	Our proof technique is similar to the one in \cite{BaekCertLessPairingFree}.
	% We let   $ \A_I $ be a Type-I IND-CLE-CCA adversary against the scheme proposed in Algorithm \ref{alg:CLENC}, we then show that one can build another algorithm \Chall~that runs in polynomial time and uses $ \A_I $ as  subroutine to solve an instance of the the CDH problem. 
	\Chall~\emph{simulates} the real environment for  $ \A_I $.  It knows  the $ t  $ secret values $ v_i's $ in the scheme, and tries to embed a random instance of the CDH problem $(P,aP,bP)  $. \Chall~sets $ aP $ as a part of the target user's ($ ID^* $)   public key  (i.e., $ Q_{ID^*} \gets aP$) and $ bP $ as a part of the challenge ciphertext (i.e., $R^*\gets bP) $.  \Chall~uses four lists, namely  \ListHI, \ListHII, \ListHIII, and \ListHIV, to keep track of the random oracle responses and following the IND-CLE-CCA experiment $ \mathit{Expt}^{\textit{IND\mbox{-}CLE\mbox{-}CCA}}_{\A} $   (Definition \ref{def:CertlessINDCCA}),   \Chall~responds to \AI~queries as follows. 
	
	\noindent \emph{Queries to} $ \hash_1(ID_i,Q_i) $:  If the entry $ (\langle ID_i,Q_i\rangle,h_{1,i})   $ exists in \ListHI, \Chall~returns $ h_{1,i} $,  otherwise, it chooses  $ h_{1,i}\Ra \gamma  $, and inserts $ (\langle ID_i,Q_i\rangle,h_{1,i})   $ in \ListHI. 
	
	\noindent \emph{Queries to} $ \hash_2(\sigma_i,m_i) $:  If the entry $ (\langle \sigma_i,m_i\rangle,h_{2,i})   $ exists in \ListHII, \Chall~returns $ h_{2,i} $,  otherwise, it chooses  $ h_{2,i} \Ra \ZZ_q  $, and inserts  $ (\langle \sigma_i,m_i\rangle,h_{2,i})   $ in \ListHII. 
	
	\noindent \emph{Queries to} $ \hash_3(K_i) $:  If the entry $ (K_i,h_{3,i})   $ exists in \ListHIII, \Chall~returns $ h_{3,i} $,  otherwise, it chooses  $ h_{3,i} \Ra \{0,1\}^n  $, and inserts  $  (K_i,h_{3,i})   $  in \ListHIII. 
	
	\noindent \emph{Queries to} $ \hash_4(\sigma_i) $:  If the entry $ (\sigma_i,h_{4,i})   $ exists in \ListHIV, \Chall~returns $ h_{4,i} $,  otherwise, it chooses  $ h_{4,i} \Ra \{0,1\}^n  $, and inserts $ (\sigma_i,h_{4,i})   $  in \ListHIV. 
	
	\noindent \emph{Public key request}:   Upon receiving a public key request on $ ID_i,$ \Chall~works as follows. If $ (\langle ID_i, U_i,Q_i\rangle , \zeta_i) $ exists in $ \texttt{List}_\pk $, then it returns $ (ID_i, U_i,Q_i) $. Else, it flips a fair coin  where $ \Pr[\zeta = 0] = \delta $, and works as follows ($ \delta $ will be determined later in the proof).   	If $ \zeta = 0 $, it runs the partial key extraction oracle below first, update  $ \texttt{List}_\pk $ and then output  $(ID_i, U_i,Q_i)  $.  If $ \zeta =1 $, pick $ t\Ra \ZZ_{q}  $, set $ Q_i\gets aP \mod p $, adds $ (ID_i, U_i,\langle \perp, Q_i  \rangle )   $ to  $ \texttt{List}_{\texttt{PartialSK}} $  and  adds $ (\langle ID_i, U_i,Q_i\rangle , \zeta_i) $ to $ \texttt{List}_\pk $, before outputting   $ (ID_i, U_i,Q_i)  $. 
	
	\noindent \emph{Partial key extraction}: Upon receiving a partial key extraction query   on $ (ID_i,U_i) $,  \Chall~works as follow:
	\vspace{-2mm}
	\begin{itemize}
		
		\item If  $ (ID_i,U_i, \langle w_i, Q_i  \rangle )   \in \texttt{List}_{\texttt{PartialSK}} $, return $ (w_i, Q_i )  $.
		\item Else,
		\begin{itemize} 	
			\item  $ w_i \Ra \ZZ_{q} $,  $Z_i\gets  w_i P \mod p $, $  (j_1,\dots,j_k ) \Ra  [1,\dots, t] $, $ Q_i\gets Z_i-\sum_{i=1}^{k}V_{j_i} +U_i \mod p$.
			
			\item If $ (ID_i,Q_i,\dots )  \in \ListHI  $,  aborts. Else,  adds  $ (\langle ID_i,Q_i\rangle,h_{1,i})   $ to \ListHI, where $ h_{1,i} \gets (j_1,\dots,j_k )   $   and output the partial private key as $ (w_i,U_i,Q_i)$ after adding it to $ \texttt{List}_{\texttt{PartialSK}} $. 
		\end{itemize}   
	\end{itemize}

	%Secret key request
	\noindent \emph{Secret key request}: Upon receiving a secret key request   on $ ID_i $, \Chall~checks if  there exists a pair $ (ID_i,u_i,U_i) \in \texttt{List}_{\texttt{SecretKey}}  $, it returns $ u_i $. Otherwise,  selects  $ u_i\Ra \ZZ_{q} $, computes $ U_i\gets  u_iP\mod p $ and inserts $ (ID_i,u_i,U_i)  $ in $ \texttt{List}_{\texttt{SecretKey}}  $.

	% Private key request
	\noindent \emph{Private key request}:   To answer a private key request on $ (ID_i,U_i) $, \Chall~runs the public key request oracle above to get $ (\langle ID_i, U_i,Q_i\rangle , \zeta_i)  \in  \texttt{List}_\pk $ and finds $ (ID_i,u_i,U_i) $ in  $\texttt{List}_{\texttt{SecretKey}}  $ . If $ \zeta =0 $, finds $ ( (ID_i,U_i, \langle w_i, Q_i  \rangle )   \in \texttt{List}_{\texttt{PartialSK}} $ and returns $ w_i+u_i $ as the response. Otherwise, it  aborts.

	%Decryption 
	\noindent \emph{Decryption query}:  Upon receiving a decryption query on $ (ID_i,Q_i,c_i= \langle R_i,u_i,v_i\rangle) $, \Chall~works as follows. 
	
	\begin{itemize} 	
		\item   Searches $ \texttt{List}_\pk $ for an entry $ (\langle ID_i, U_i,Q_i\rangle , \zeta_i)  $. If $ \zeta = 0 $, works as follows. 
		\begin{itemize} 	
			\item  Searches  $ \texttt{List}_{\texttt{PartialSK}} $ for a tuple $(ID_i,U_i, \langle w_i, Q_i  \rangle )   $ and searches for  $ (ID, \langle w, Q  \rangle )   $   in $ \texttt{List}_{\texttt{PartialSK}} $, set $\sigma'\gets  \hash_3((w+\alpha)R \mod p) \oplus u $, $ m'=v\oplus\hash_4(\sigma') $, $ r':= \hash_2(\sigma',m)$. 
			\item Checks if $   R= r'P \mod p  $ holds, outputs $ m' $
		\end{itemize}   
		\item Else, if $ \zeta =1 $, works as follows. 
		
		\begin{itemize} 	
			\item   Runs the oracle for $ \hash_1 $ to get   $h_{1,i} $ (to compute the public key $ Y_i $) and checks lists   \ListHII, \ListHIII~and \ListHIV~for tuples $ (\langle \sigma_i,m_i\rangle,h_{2,i})   $,  $  (K_i,h_{3,i})    $, and $(\sigma_i,h_{4,i})  $,   such that $ R_i = h_{2,i}P \bmod p $, $ u = h_{3,i}  \oplus \sigma_i $ and $ v = h_{4,i}   \oplus m_i $  exists.  Checks if $ K_i = r_i(Y_i+Q_i) $ holds,  outputs $ m_i $, else, aborts. 
		\end{itemize}   
	\end{itemize}

	After the first round of queries, 	 \AI~outputs  $ ID^*$ and two messages $ m_0 $ and $ m_1 $ on which it wishes to be challenged on. We assume that $ ID^* $ has been already queried to $ \hash_1 $ and  was not submitted to the \emph{private key request oracle}. \Chall~checks $ (\langle ID^* ,U^* ,Q^* \rangle,\zeta) \in  \texttt{List}_{PK} $ if  $  \zeta =0$, it~aborts. Otherwise, it computes the challenge ciphertext as follows.   $ \beta^*\Ra \{0,1\} $, $ \sigma^* \Ra \{0,1\}^* $, $ u^* \gets \{0,1\}^{n} $, $ b \Ra \{0,1\} $.  $ R^* \gets aP $ (this implicitly implies that $ a =  \hash_2(\sigma^*,m_b)  $), $ \hash_3 (K_{ID^*}) \gets  u^*\oplus  \sigma^*$ and $ v^*\gets \hash_4(\sigma^*)\oplus m_b $. Return $ (R^*,u^*,v^*) $.
	
	\AI~initiates the second round of queries similar as above, with the restrictions defined in Definition \ref{def:CertlessEnc}.  When $ \AI $ outputs its decision bit $ b' $, \Chall~returns a set $ \Lambda = \{K_{i}-R_{i}^{y_i}, $where $ K_i $s  are the input queries to \hash$ _3 $$\} $.
	
	Notice that if \Chall~does not   abort, and $ \AI $ outputs its decision bit $ b' $, then the public key must have the $ Q_{ID^*}  = aP$, and given how the challenge ciphertext is formed (e.g., $ R^*=bP $), $ K_{ID^*} = y_{ID^*}abP $ should hold,  where $  y_{ID^*} $ is known to \Chall. Hence, the answer to a random instance of the the  CDH  problem $ (P,aP,bP) $, can be derived from examining the \AI's choice of public key and $ \hash_3  $ queries. 
	
	Here, we provide an indistinguishability argument for the above simulation. First we look at the simulation of the decryption algorithm. If $ \zeta =0 $, we can see that the simulation is perfect. For $ \zeta =1  $, an error might occur in the event that $ c_i $ is valid, but $ (\sigma_i,m_i)  $, $ K_i $, and $ \sigma_i $ were never queried to $ \hash_2,\hash_3, $ and $ \hash_4 $, respectively. For the first two hash functions, the probability that the $ c_i  $ is valid, given  a query to $ \hash_3 $ was never made, considers the query to $ \hash_2$ as well (not considering the checking phase in the simulation). Therefore, the probability that this could occur is $ \frac{q_{\hash_2}}{2^n} +\frac{1}{p}  $.  When considering $ \hash_4 $, this probability is $ \frac{1}{2^n} +\frac{1}{p}  $. Given the number of decryption queries $ q_D $, we have the probability of decryption error $  \frac{ q_D(q_{\hash_2}+1)}{2^n} +\frac{ 2q_D}{p}  $. 
	
	\Chall~will also fail in simulation during the partial key extraction queries if the entry $ (ID_i,Q_i, \dots) $ already exists in $ \ListHI $. This will happen with probability $ \frac{q_{\hash_1}}{2^\gamma}$. 
	
	The probability that~\Chall~does not abort in the simulation is $ \delta^{q_{\sk}}(1-\delta) $ which is maximized at $ \delta = 1-\frac{1}{q_{\sk}+1} $.  Therefore, the probability that $ \Chall $ does not abort is $ \frac{1}{\mathbf{e}(q_{\sk}+1)} $, where $ \mathbf {e} $ is the base of natural logarithm. Given the argument above, we know that if $ (\sigma^*,m_b) $, $ (K^*) $  were never  queried to $ \hash_2 $ and $ \hash_3 $ oracles, then \AI~cannot gain any distinguishing advantage more than $ \frac{1}{2} $. Given all the above arguments, the probability that $ K_{ID^* }$ has been queried to $ \hash_3 $ is $ \geq \frac{2\epsilon}{\mathbf{e}(q_{\sk}+1)} - \frac{q_{\hash_2}}{2^n}-\frac{ q_D(q_{\hash_2}+1)}{2^n} -\frac{ 2q_D}{p}$.
	
	Therefore, if the above probability occurs, \Chall~can solve the CDH problem by finding and computing  $ K_{ID^*} = y_{ID^*}abP $ from the list $ \Lambda $. Given the size of the list $ \Lambda $ (i.e., $ q_{\hash_3} $), the probability for \Chall~to be successful in solving CDH is:  $ \epsilon' > \frac{1}{q_{\hash_3}}\big(\frac{2\epsilon}{\mathbf{e}(q_{\sk}+1)} - \frac{q_{\hash_2}}{2^n}-\frac{ q_D(q_{\hash_2}+1)}{2^n} -\frac{ 2q_D}{p}\big)$
	%
	%
	%Lets define  $  \eventhashII$ be the event where $ (\sigma^*,m_b) $ was queried to $ \hash_2 $ oracle  and   $  \eventhashIII$ be the event where $ (K^*) $ was queried to $ \hash_3 $ oracle, and \eventDec be the event that ciphertext is valid and not query was made to $\hash_3$.  
\end{proof}

	\begin{theorem}\label{thm:IND-CLE-CCA-II}
	If an adversary  $ \A_{II} $ can break the IND-CLE-CCA security of the encryption scheme proposed in Algorithm \ref{alg:CLENC} after $  q_{\hash_i}$ queries to random oracles $ \hash_i $ for $ i \in \{1,2,3,4\}  $,    $q_{D} $ queries to the decryption oracle and $ q_{\sk} $ to the secret  key extraction oracle with probability $ \epsilon $. There exists another algorithm \Chall~that runs $ \AII$ as subroutine and breaks a random instance of the CDH problem ($ P,aP,bP $) with probability $ \epsilon' $ where: $ \epsilon' > \frac{1}{q_{\hash_3}}\big(\frac{2\epsilon}{\mathbf{e}(q_{\sk}+1)} - \frac{q_{\hash_2}}{2^n}-\frac{ q_D(q_{\hash_2}+1)}{2^n} -\frac{ 2q_D}{p}\big) $.
\end{theorem}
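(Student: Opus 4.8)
The plan is to follow the structure of the proof of Theorem~\ref{thm:IND-CLE-CCA}, but to embed the CDH challenge on the \emph{user} side of the key rather than on the KGC side, since here the adversary $\A_{II}$ \emph{is} a malicious KGC. Concretely, $\Chall$ runs $\CLEKGCsetup$ honestly, keeps $msk=(v_1,\ldots,v_t)$, and hands both $mpk$ and $msk$ to $\A_{II}$. Given a CDH instance $(P,aP,bP)$, $\Chall$ plants $aP$ as the secret-value commitment $U^*$ of the eventual target identity (so the unknown $a$ plays the role of $\alpha^*$) and $bP$ as the ephemeral point $R^*$ of the challenge ciphertext. The random oracles $\hash_1,\ldots,\hash_4$ are lazily simulated via the lists $\ListHI,\ldots,\ListHIV$ exactly as in Theorem~\ref{thm:IND-CLE-CCA}, answering fresh queries uniformly at random and caching them.

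On each public-key request for an identity $ID_i$, I would have $\Chall$ flip a biased coin with $\Pr[\zeta_i=0]=\delta$. If $\zeta_i=0$ it acts honestly: pick $\alpha_i,\beta_i\Ra\ZZ_q$, set $U_i=\alpha_iP$, $Q_i=U_i+\beta_iP$, read the indices from $\hash_1(ID_i,Q_i)$, and record the honestly derived $w_i$ and $x_i=w_i+\alpha_i$. If $\zeta_i=1$ it sets $U_i=aP$, picks $\beta_i\Ra\ZZ_q$, $Q_i=aP+\beta_iP$, and records the computable partial key $w_i=\sum_{\ell=1}^{k}v_{j_\ell}+\beta_i$ but neither $\alpha_i=a$ nor $x_i$. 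A $\oracle_{\texttt{PartKey}}$ query is then answered directly from $msk$; a $\oracle_{\texttt{SecKey}}$ or $\oracle_{\texttt{Corrupt}}$ query aborts whenever $\zeta_i=1$, and on a $\zeta_i=0$ identity returns $\alpha_i$ (for $\oracle_{\texttt{SecKey}}$) or $w_i+\alpha_i$ (for $\oracle_{\texttt{Corrupt}}$). Decryption queries are handled as in Theorem~\ref{thm:IND-CLE-CCA}: for $\zeta_i=0$ identities $\Chall$ knows $x_i$ and decrypts faithfully, and for $\zeta_i=1$ identities it scans $\ListHII,\ListHIII,\ListHIV$ for a consistent triple and otherwise rejects, which introduces the same $\tfrac{q_D(q_{\hash_2}+1)}{2^n}+\tfrac{2q_D}{p}$ simulation error.

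In the challenge phase $\Chall$ aborts unless $\zeta_{ID^*}=1$; otherwise it picks $b\Ra\{0,1\}$, $\sigma^*\Ra\{0,1\}^n$, $u^*\Ra\{0,1\}^n$, sets $R^*=bP$ (implicitly fixing $b=\hash_2(\sigma^*,m_b)$), implicitly defines $\hash_3(K_{ID^*})=u^*\oplus\sigma^*$, and returns $(R^*,\,u^*,\,\hash_4(\sigma^*)\oplus m_b)$. The algebraic heart of the reduction is that the ephemeral key a legitimate receiver reconstructs is $K_{ID^*}=r^*(Y^*+Q^*)=b\,x^*P=b(w^*+a)P=w^*R^*+abP$, where $Y^*=\sum_{\ell=1}^{k}V_{j_\ell}$; since $\Chall$ knows both $w^*$ (it chose $\beta^*$ and holds $msk$) and $R^*=bP$, after $\A_{II}$ halts it outputs a uniformly chosen element of $\{\,K_i-w^*R^*:K_i\in\ListHIII\,\}$, which equals $abP$ whenever $K_{ID^*}$ was queried to $\hash_3$. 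The rest is routine bookkeeping: $\Chall$ survives all aborts with probability $\delta^{q_{\sk}}(1-\delta)$, maximized at $\delta=1-\tfrac1{q_{\sk}+1}$ to give $\tfrac1{\mathbf e(q_{\sk}+1)}$; conditioned on no abort, $\A_{II}$'s view is independent of $b$ unless $(\sigma^*,m_b)$ or $K_{ID^*}$ is queried to $\hash_2$ or $\hash_3$, so $\Pr[K_{ID^*}\in\ListHIII]\ge\tfrac{2\epsilon}{\mathbf e(q_{\sk}+1)}-\tfrac{q_{\hash_2}}{2^n}-\tfrac{q_D(q_{\hash_2}+1)}{2^n}-\tfrac{2q_D}{p}$, and an extra $\tfrac1{q_{\hash_3}}$ factor for guessing the right list entry yields the claimed $\epsilon'$.

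The step I expect to demand the most care is the decryption-oracle simulation for $\zeta_i=1$ identities: there $\Chall$ does not know $x_i=w_i+a$ and must answer via the cached hash lists, so one has to bound the probability that a queried ciphertext is valid yet its underlying $(\sigma,m)$, $K$, or $\sigma$ was never sent to $\hash_2$, $\hash_3$, or $\hash_4$ --- this Fujisaki--Okamoto-style argument is what produces the $\tfrac{q_D(q_{\hash_2}+1)}{2^n}+\tfrac{2q_D}{p}$ loss and is the only place the simulation deviates from the real game. A conceptual point worth stressing is that, unlike in the Type-I proof, $\A_{II}$ holds $msk$, so $\Chall$ has no freedom to reprogram $\hash_1$ or any KGC-side value and must compute all of them honestly; this causes no difficulty precisely because the CDH embedding is confined to the user secret value $U^*=aP$, which is orthogonal to $msk$, but it does force one to verify that every public key, partial key, and corrupted private key given to $\A_{II}$ for $\zeta_i=0$ identities is distributed exactly as in the real experiment.
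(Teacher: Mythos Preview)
Your proposal is correct and follows essentially the same approach as the paper's (two-paragraph sketch) proof: the challenger knows $msk=(v_1,\dots,v_t)$, embeds $aP$ on the user side as $U_{ID^*}$ (rather than in $Q$ as in the Type-I reduction), sets $R^*=bP$ in the challenge ciphertext, and recovers $abP=K_{ID^*}-w^*R^*$ using its knowledge of the $v_i$'s and $\beta^*$. Your write-up actually supplies more detail than the paper does, including the explicit extraction identity $K_{ID^*}=b(w^*+a)P=w^*R^*+abP$ and the Fujisaki--Okamoto decryption-simulation accounting, all of which are consistent with the stated bound.
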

\begin{proof}[Sketch]  %	Due to the page limitation, the full proof will be provided in the full version of the paper. 
	Having access to  random oracles,  and by keeping lists similar to above, the challenger	\Chall~can simulate an indistinguishable environment for \AII~and respond to its queries similar to the above proof.  Note that following \autoref{def:CertlessINDCCA}, \AII~can query for the secret key of all the users, except for the target user $ ID^* $.  
	
	\Chall~knows  the $ t  $  private values $ v_i's $ in the scheme, and tries to embed a random instance of the CDH problem $(P,aP,bP)  $.   By flipping a fair coin, as in the {\em public key request} query above, \Chall~defines the probability to embed  $ aP $ in the target $ U_{ID^*}  $ value.  \Chall~sets $ bP $ as a part of the challenge ciphertext (i.e., $R^*\gets bP) $. 
	
	After the \AII~outputs a forgery,  \Chall~can extract the solution to the CDH problem since it has knowledge over the $ t $ secret values and  $ \beta^*$.  
	
\end{proof}
%This thorem is followed by a series of lemmas that will be proved in the appendix.  The method of the proof is similar ot those in \cite{BonehID-Based,PatersonCertLess}. We show that the IND-CCA security   of our \CLE  can be related to the IND-CCA security of a traditonal  public key encryption Though the lemmas we will show that the security of our \CLE scheme can be reduced 

%\begin{lemma}
%The proposed signature 
%	\end{lemma}

	\begin{Lemma}\label{lem:PKrep}
	A public key replacement attack by $ \A_I $  is not practical since it will falsify the private key. 
\end{Lemma}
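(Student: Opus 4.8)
The plan is to show that the binding technique forces any replacement of the user commitment $U$ to break the structural consistency check performed in \CLEUserKeyGen~(Algorithm \ref{alg:CLENC}, the step testing $W' = W''$), so that the resulting private key no longer matches the public data used for encryption/verification. First I would fix notation: let $U = \alpha P$ be the honest commitment the user sent to the KGC, let $W = \beta P$ be the KGC nonce commitment, and let $Q = U + W$ be the public commitment returned together with $w = y + \beta \bmod q$, where $y = \sum_{i=1}^{k} v_{j_i}$ and $(j_1,\dots,j_k) = \hash_1(ID,Q)$. The key observation is that $Q$ is bound into $\hash_1$, hence the index set $(j_1,\dots,j_k)$ — and therefore $Y = \sum_{i=1}^{k} V_{j_i}$ — is determined by $Q$. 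Now suppose $\A_I$ replaces $U$ by some $U' \neq U$ of its choice, so the public data seen by encryptors becomes $(ID, Q')$ with $Q'$ derived from $U'$.

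The main steps: (1) In \CLEUserKeyGen, the legitimate user (or an honest verifier acting on its behalf) recomputes $W' \gets Q - U \bmod p$ and $W'' \gets wP - Y \bmod p$, and proceeds only if $W' = W''$. For the honest transcript one checks $wP - Y = (y+\beta)P - yP = \beta P = W$ and $Q - U = W$, so the test passes and $x = w + \alpha$ is a valid key: indeed $xP = (w+\alpha)P = Y + \beta P + \alpha P = Y + W + U = Y + Q$, which is exactly the public key $Y_a + Q_a$ used inside \cleenc~and \CLSVerify. (2) If instead $Q$ is replaced by $Q' \neq Q$, then $\hash_1(ID,Q')$ yields a (w.h.p. different) index set, hence a different $Y'$, and there is no value $w'$ known to $\A_I$ with $w'P = Y' + Q' - U'$ unless $\A_I$ can compute the discrete logarithm of $Y' + Q' - U'$ — but $Y'$ is a sum of $k$ elements of $mpk$ whose discrete logs are the KGC's secrets $v_i$, unknown to $\A_I$. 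So the consistency check $W' = W''$ fails, or equivalently any $x$ the adversary can produce does not satisfy $xP = Y' + Q'$, so decryption in \cledec~(which needs $x_a R = r(Y_a+Q_a)$) and verification in \CLSVerify~reject. (3) Conclude that a public-key replacement either leaves the honest $Q$ in place (in which case nothing was actually replaced in the data that matters, since $Q$ is what feeds $\hash_1$ and the encryption), or changes $Q$ and thereby invalidates the private key, making the attack non-functional.

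The hard part will be making precise the claim in step (2) that $\A_I$ cannot manufacture a valid $w'$ for a replaced $Q'$: this is where I would invoke the ECDLP hardness (Definition \ref{def:ECDLP}) together with the fact that the $v_i$ are secret, essentially reducing a successful replacement-that-still-verifies to extracting $\log_P(Y' + Q' - U')$, and noting that $Y'$ being a fresh $\hash_1$-selected subset sum of the $V_i$ behaves like a random group element in the random oracle model. A secondary subtlety is the degenerate case where the altered $Q'$ happens to collide with a previously seen $Q$ in $\hash_1$ (so the index set is reused); I would bound this by the $\hash_1$-collision probability $q_{\hash_1}/2^\gamma$, exactly as in the abort analysis of Theorem \ref{thm:IND-CLE-CCA}. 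Everything else is the routine correctness computation in step (1), which I would keep to one line.
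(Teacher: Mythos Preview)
Your proposal is correct and follows the same core idea as the paper: the binding $Q=U+W$ together with the fact that the index set $(j_1,\dots,j_k)$ is derived from $\hash_1(ID,Q)$ means any replacement of $U$ (hence $Q$) invalidates the partial private key $y$ and therefore $w$ and $x$. The paper's own proof is a three-sentence informal remark making exactly this observation, plus the side comment that even learning $(\alpha,U)$ only reveals $W=Q-U$, which discloses nothing about $\beta$.

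Where you differ is in ambition: the paper stops at the structural observation and does not invoke ECDLP, analyze the \CLEUserKeyGen~check $W'=W''$, or bound $\hash_1$-collisions; it simply asserts that replacing $U$ ``falsifies'' $Q$ and hence $y$, and notes that a replacement attack would require $\A_I$ to obtain a fresh partial private key for the new $U'$. Your step~(2) and the ``hard part'' paragraph try to turn this into a genuine reduction, which is more than the paper claims or proves for this lemma. If you go that route, be careful with the quantifiers: $\A_I$ controls both $Q'$ and $U'$, so the target is really ``know the discrete log of $Y'+Q'$'' where $Y'$ is hash-determined by $Q'$; your phrasing ``discrete logarithm of $Y'+Q'-U'$'' slightly obscures that the adversary can shift by any known $U'$, and the actual obstruction is the unknown discrete log of $Y'$ itself. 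That refinement aside, your plan is sound and strictly more rigorous than what the paper provides.
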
 

\begin{proof}
	Note that if $ \A_I  $ replaces $ U $ with a new value $ U' $ (which it might know the corresponding secret key), then the existing $ Q $ will be falsified since $ Q= U+W $ and this will also falsify the current partial private key component $ y  $ since it is computed based on the indexes that are obtained by computing $ \hash_1(ID,Q) $. Also note that, if $ \A_I  $ can obtain the original $ (\alpha, U) $, given $ Q $ is public, it can compute $ W $, however, $ W  $ is merely the commitment of $ \beta  $ and it does not disclose any information about $ \beta $.   The public key replacement attack in our security proof is possible if $ \A_I $ requests a new partial private key for each new $ U' $ .
\end{proof}

\begin{Lemma}\label{lem:IDSig}
	If an adversary  $ \A_I $ can break the EU-CMA security of the signature scheme proposed in  Algorithm \ref{alg:CLS} ,  then one can build another algorithm \Chall~ that runs $ \AI $ as subroutine and breaks a random instance of  the ECDLP ($ P,aP $).
\end{Lemma}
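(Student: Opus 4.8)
The plan is to run the standard Forking-Lemma reduction from Schnorr signatures to the ECDLP, because---exactly as in our other constructions (\autoref{sec:ProposedSchemes})---the certificateless key structure is purely additive in the exponent: for any user the quantity $X := Y_a + Q_a = (y+\beta+\alpha)P = x_a P$, so $\CLSSign$/$\CLSVerify$ in \autoref{alg:CLS} are just Schnorr's scheme with secret key $x_a$ and public key $X$, the only twist being that $X$ is written as $Y_a + Q_a$ with $Y_a$ pinned down by $\hash_1(ID_a,Q_a)$. Given an ECDLP instance $(P,aP)$, I would build the solver $\Chall$ as follows.

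First, $\Chall$ generates $msk=(v_1,\dots,v_t)$ and $mpk=(V_1,\dots,V_t)$ itself (legitimate, since $\AI$ has no access to $msk$). For every user whose secret value $\Chall$ chooses, it can answer \emph{all} of $\oracle_{\texttt{PartKey}}$, $\oracle_{\texttt{SecKey}}$, $\oracle_{\texttt{Corrupt}}$ and $\oracle_{\texttt{Sign}}$ honestly, because it knows $\alpha$, runs $\CLEPartialKeyGen$ to obtain $(w,Q)$, and hence knows $x=w+\alpha$. $\Chall$ would guess the identity $ID^*$ on which $\AI$ forges---either via the coin-flipping argument used in the proof of \autoref{thm:IND-CLE-CCA} or simply by picking one of the polynomially many queried identities uniformly---incurring only a polynomial loss. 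For $ID^*$ it fixes, upfront, $\alpha^* \Ra \ZZ_q$, $U^*\gets\alpha^* P$, $Q_{ID^*}\gets aP$ (implicitly fixing the unknown $\beta^* = a-\alpha^*$, which is never needed), and programs $\hash_1(ID^*,Q_{ID^*})$ with fresh indices $(j_1,\dots,j_k)$ so that $Y_{ID^*}=\sum_i V_{j_i}=y_{ID^*}P$ with $y_{ID^*}$ known to $\Chall$. By \autoref{def:CertlessEUCMA} the queries $\oracle_{\texttt{PartKey}}(ID^*)$ and $\oracle_{\texttt{Corrupt}}(ID^*)$ are forbidden to $\AI$, and by \autoref{lem:PKrep} the target's commitment cannot be silently replaced; hence $\Chall$ is never asked for $x_{ID^*}$, and $\oracle_{\texttt{SecKey}}(ID^*)$ is simply answered with $\alpha^*$. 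Signing queries $(ID^*,m)$ I would handle by the usual honest-verifier simulation: pick $s,e\Ra\ZZ_q$, set $R\gets sP+e(Y_{ID^*}+Q_{ID^*})$, program $\hash_5(m,R)\gets e$, and abort only if $(m,R)$ was queried before (negligibly likely), so the transcripts are distributed exactly as real signatures.

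Next, I would argue that running $\AI$ under this simulation yields, conditioned on a correct target guess and no abort, a valid forgery $(ID^*,m^*,(s^*,e^*))$ with probability essentially $\epsilon$, and that (up to a $q_{\hash_5}/q$ term) this forgery comes with a prior query $\hash_5(m^*,R^*)$ where $R^* = s^*P + e^*(Y_{ID^*}+Q_{ID^*})$. A standard application of the Forking Lemma---rewinding $\AI$ to that query and supplying an independent response $e^{**}\neq e^*$---gives a second valid forgery $(ID^*,m^*,(s^{**},e^{**}))$ with the same $R^*$. From $s^*P + e^*(Y_{ID^*}+Q_{ID^*}) = s^{**}P + e^{**}(Y_{ID^*}+Q_{ID^*})$ one extracts $x_{ID^*} = (s^*-s^{**})(e^{**}-e^*)^{-1}\bmod q$, which satisfies $x_{ID^*}P = Y_{ID^*}+Q_{ID^*} = (y_{ID^*}+a)P$, so $\Chall$ outputs $a\gets x_{ID^*}-y_{ID^*}\bmod q$ as the ECDLP solution.

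The hard part, such as it is, is purely bookkeeping: making the target guess without disturbing the honest answers for every other user (this works precisely because $\Chall$ owns $msk$ and all non-target $\alpha$'s, so everything but $ID^*$ is answered by the real algorithms), bounding the collision probability when programming $\hash_5$ during the signing simulation, and invoking the Forking Lemma so that the two forgeries genuinely share $R^*$---i.e., that the forking point is the $\hash_5(m^*,R^*)$ query and not an earlier one. Making these precise gives a quantitative version in the style of \autoref{thm:IND-CLE-CCA}; the certificateless structure itself contributes nothing beyond the bijection $X \leftrightarrow (Y_a,Q_a)$.
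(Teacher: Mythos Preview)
Your proposal is correct and follows essentially the same approach as the paper: the paper's proof is a sketch that has $\Chall$ hold $msk=(v_1,\dots,v_t)$, embed the ECDLP challenge as $Q_{ID^*}\gets aP$, simulate queries as in the proof of \autoref{thm:IND-CLE-CCA}, and then invoke the Forking Lemma (citing Galindo--Garcia and Schnorr) to obtain two forgeries from which $a$ is extracted using knowledge of the $v_i$'s and $\alpha_{ID^*}$. Your write-up simply fills in the details the paper omits---the explicit honest-verifier simulation of $\oracle_{\texttt{Sign}}$ via programming $\hash_5$, and the concrete extraction $a=x_{ID^*}-y_{ID^*}$---but the reduction is the same.
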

\begin{proof}
	Due to the space constraint, here we give the high level idea of our proof. 	We let \AI~be as   in Definition \ref{def:CertlessEUCMA}, then we can build another algorithm \Chall~that uses \AI~as a subroutine, and upon \AI's successful forgery, solves a random instance of the   ECDLP   ($ P,aP $).  \Chall~knows the $ t $ secret values $ v_i $'s and, similar to the proof of Theorem \ref{thm:IND-CLE-CCA}, it sets $ aP $ as a part of the target user's public key (i.e., $Q_{ID^*} \gets aP$).  Most of the simulation steps are like the ones in the proof of  Theorem \ref{thm:IND-CLE-CCA}. At the end of the simulation phase,  \AI~outputs a forgery signature $(s^*_1,e^*_1) $, the proof then uses the forking lemma \cite{Pointcheval1996} to run the adversary again to obtain a second forgery  $(s^*_2,e^*_2) $, using the same random tape. Our proof will follow the same approach as in \cite{DBLP:conf/africacrypt/GalindoG09} which is very similar to the proof  in  \cite{Schnorr91}. Given two forgeries and the knowledge of \Chall~on the $ v_i's $ and $ \alpha_{ID}^* $, \Chall~can compute $ a  $ and solve the ECDLP. Note that similar to Schnorr \cite{Schnorr91} the security of the scheme will be non-tight due to the forking lemma. 
\end{proof}

\noindent\textbf{Parameters Selection for ($ t,k $).} Parameters $(t,k)$  should be selected such that the probability $\frac{q_{\hash_1}\cdot k!}{2^{\gamma}}$ is negligible. Considering that $\gamma= k  \log_2{t}$ (since $k$ indexes that are $\log_2{t}$-bit long are selected with the hash output), this gives us $\frac{q_{\hash_1}\cdot k!}{2^{k \abs{t}}}$. We further elaborate on some choices of $(t,k)$ along with their performance implications in \autoref{sec:Performance}.

%Notice that the simulation of the $ $ \ $hash_i  $ for $ i\in\{1,2,3,4\} $  oracle  is identical to the one in the real scheme with a high probability (we negligent the in). 

%Forgery conditions on the fact that the adv should have asked $ \hash_3 $, or will in the future ask it. We rely on the fact that \Chall~knows $ v_i's $. 

%s\input{Sec-Kenny}

\section{Performance Analysis and Comparison} \label{sec:Performance}
\newcommand{\mec}{\ensuremath {m_{EC}}{\xspace}}
\newcommand{\aec}{\ensuremath {a_{EC}}{\xspace}}
\newcommand{\dmec}{\ensuremath {dm_{EC}}{\xspace}}
\addtolength{\parskip}{-0.5mm}
\setlength{\textfloatsep}{0.1cm}

We first present the  analytical  and then experimental performance analysis and comparison of our schemes with their counterparts.  We focus on the \emph{online} operations (e.g., encryption, signing, key exchange)  for which both our IDB and CL schemes have the same algorithms, rather than one-time (offline) processes like setup and key generation. Since the online operations are identical in IDB  and CL systems in our case,  we refer to them as ``Our Schemes'' in the following tables/discussions.  We consider the cost of certificate verification for schemes in traditional PKI. We only consider the cost of verifying and communicating the cost of one certificate, which is highly conservative since in practice (i.e., X.509) there are at minimum  two certificates in a certificate  chain. This number could be as high as ten certificates in some scenarios.  

%For these operations, both   our identity-based and certificateless schemes have the same algorithms (i.e., the compatibility), therefore, we regard them both as ``our schemes'' in the following tables/discussions. 

% We then present our parameters selection followed by our evaluation metrics and experimental setup. We conclude with  a detailed experimental comparisons of our schemes and their counterparts.  

\begin{table*}[t] 	
			\caption{Analytical Comparison of Public Key Encryption Schemes}\label{tab:Analytical:Enc} 
	\centering
	
\small
	%	\resizebox{\textwidth}{!}{
%	\vspace{0mm}\resizebox{0.95\textwidth}{!}{%
		
		\begin{threeparttable}		\footnotesize
			\begin{tabular}{| c || c | c | c | c | c | c | c | c |}
				\hline
				\textbf{Scheme} & \makecell{\textbf{$ \sk $ } }& \textbf{Enc}  & \textbf{Comm.} & \textbf{Dec}  & \textbf{ $ \pk $ } &  \textbf{ $ mpk $ } & \makecell{\textbf{System}\\ \textbf{Type}}  & $ \kappa^\dagger $\\ \hline
				
				ECIES~\cite{ECIES_Shoup} & $\abs{q}$ & $2\,m_{EC}+\dmec$  & $\makecell{2\abs{p} +b+ \\d+CF}$ & $\mec$ & $\abs{p}$ &-&TD&128\\ \hline
				
				BF~\cite{BonehID-based}& $\abs{p}$ & $ bp+\mec+ex$ &  $\abs{p} +b+|M|$  & $bp +\mec$ & $\abs{p}$ &$ \abs{p} $&IB&80\\ \hline
				
				AP~\cite{PatersonCertLess} & $\abs{p}$ & $3\,bp+\mec+ex$ & $\abs{p} +b+|M|$ & $bp+ \mec$ & $2\abs{p}$ &$ \abs{p} $&CL&80\\ \hline

				BSS~\cite{BaekCertLessPairingFree} &  $2\,\abs{q}$ & $ 4\,ex+m $& $ \abs{p} +b+|M|$& $ 3\,ex $&$ 2|p| $ &$ \abs{p} $&CL&128\\ \hline 
				
				WSB~\cite{BertinoCertLessDrone} & $ \abs{p}+\abs{q}$ & $3\,\mec+2\,\aec $ & $2\abs{q}$ + $\abs{c}$ & $2 \,\mec$ & $2\abs{p}$ &$ \abs{p} $&CL&128\\ \hline \hline 
				
				%			Our ID-Based & & & & & \\ \hline
				
				Our Schemes & $\abs{p}$ & $2\,\mec + k \,\aec $ &  $ \abs{p} +b+|M|  $& $2\,\mec$ & $ \abs{p}$&$ t\,\abs{p} $&IB/CL  &128\\ \hline
				
			\end{tabular}
			\begin{tablenotes}[flushleft] \scriptsize{
					$ \dagger $ Denotes the security bit.  \textbf{Enc}, \textbf{Dec}, and \textbf{Comm.} represent  encryption, decryption, and communication load (bi-directional), respectively. $\mec$,  $\aec$,  and $\dmec$ denote the costs of EC scalar multiplication,  EC addition, and double scalar multiplication over modulus $p$, respectively.   $ m $,   $ ex $ and   $bp$ denote multiplication,   exponentiation   and   pairing operation, respectively. $k$ is the BPV parameter that shows how many precomputed pairs are selected in the online phase. $ b,d $ and $ CF $ denote block/key size for symmetric key encryption, message digest (i.e., MAC) size and size of the certificate, respectively. $ M $ denotes message space size. TD, IDB, and CL represent traditional public key cryptography,  identity-based cryptography, and certificateless cryptography, respectively. 
				}
			\end{tablenotes}

		\end{threeparttable}
%	}

\end{table*}
%\raggedbottom
\begin{table*}[t]
		
		\caption{Analytical Comparison of Digital Signature Schemes}\label{tab:Analytical:Signature}\vspace{0mm}
	\centering
 	\small
% \resizebox{0.95\textwidth}{!}{%
	%	\resizebox{0.95\textwidth}{!}{%
	%	\resizebox{\textwidth}{!}{
	\vspace{0mm}
	\begin{threeparttable}
		\begin{tabular}{| c || c | c | c | c | c | c | c | c |} 
			\hline
			\textbf{Scheme} &\makecell{\textbf{$ \sk $ } } & \textbf{\makecell{Sign}}  & \textbf{Comm.} & \textbf{\makecell{Verify}} &  \textbf{ $ \pk $ } & $ mpk $ & \makecell{\textbf{System}\\ \textbf{Type}} &$ \kappa $\\ \hline 
			
			Schnorr~\cite{Schnorr91} & $\abs{q}$ & $\mec$ & $2\abs{q} + CF$ & $2\dmec$ & $\abs{p}$ & -& TD& 128\\ \hline

			GG~\cite{DBLP:conf/africacrypt/GalindoG09}& $\abs{q}$ &  $\mec$ &  $2\abs{p} +|q|$  & $2\,\dmec$ & $\abs{p}$ &$ \abs{p}  $ &IDB&128\\ \hline
			AP~\cite{PatersonCertLess} & $\abs{p}$ & $2\,\mec+\aec+bp$ & $\abs{q}+\abs{p}$ & $4\,bp+ex$& $2\abs{p}$& $ \abs{p}  $&CL& 80\\ \hline 
			
			%				ZWXF~\cite{CerteLessSignatureZWXF} & $2\abs{q}$ & $3$ $Emul$ & $2\abs{q}$ & $4$ $P$  & $\abs{q}$ &CL \\ \hline   
			
			KIB~\cite{CerteLessSignatureKIB} & $\abs{q}$ & $\mec$ & $\abs{q}+\abs{p}$  & $3\,\mec$ & $3\,\abs{p}$ & $ 2\,\abs{p} $& CL &128\\ \hline  \hline
			
			%			AQ-Hang~\cite{} & & & & & \\ \hline \hline 
			
			%			Our ID-Based & & & & & \\ \hline
			
			Our Schemes& $\abs{q}$ & $\mec$ & $2\abs{q}$ & $\dmec+ k \;\aec$ & $\abs{p}$ & $ t\,\abs{p} $&IDB/CL& 128\\  \hline
			
		\end{tabular}
		%		\begin{tablenotes}[flushleft] \scriptsize{
		%				$Emul$ and $Eadd$ denote the costs of EC scalar multiplication over modulus $q$ and EC addition over modulus $q$, respectively. $H$ and $Mulq$ denote a cryptographic hash and a modular multiplication over modulus $q$, respectively. $k$ is the BPV parameter that shows how many precomputed pairs are selected in the online phase. Suggested value for $k=18$ \cite{}. $r'$ is the constant number of public keys (servers) that the node will communicate.  \\
		%				$\Gamma = n\cdot \abs{q}$ where $n$ is the number of precomputed pairs. Parameter sizes for $n$, $q$ and $H$ are: $n = 160$, $\abs{q} = 192$ bit, $\abs{H} = 256$ bit
		%			}
		%		\end{tablenotes}
	\end{threeparttable}
% 	}

\end{table*}\raggedbottom  
\begin{table*}[t]

		\caption{Analytical Comparison of Key Exchange Schemes}\label{tab:Analytical:KeyExchange}
	\small
	\centering
	%	\small
	
	%	\resizebox{\textwidth}{!}{

%	\resizebox{0.95\textwidth}{!}{%
		\begin{threeparttable}
			\begin{tabular}{| c || c | c |  c | c | c| c | c |}
				\hline
				\textbf{Scheme} &  \makecell{\textbf{$ \sk $ } }  & \textbf{User Comp.}  & \textbf{Comm.} & \textbf{ $ \pk $ } &$ mpk $  &\makecell{\textbf{System}\\ \textbf{Type}}\\ \hline 
				
				Ephemeral ECDH~\cite{DH} & $\abs{q}$ & $ 2 \mec $   & $2\abs{{p}} +CF$ &  $\abs{p}$ & -&TD\\ \hline
				
				ECHMQV~\cite{DBLP:conf/crypto/Krawczyk05} & $\abs{q}$ & $3\,\mec $ & $ 2\abs{p} + CF$ &  $\abs{p}$ & -&TD \\ \hline
				
				TFNS~\cite{DBLP:conf/esorics/TomidaFNS19} & $ \abs{p}$ & $bp + 5\,\mec$ & $ \abs{p}$ & $ \abs{p}$ & $ \abs{p} $&IDB\\ \hline  
				
				AP~\cite{PatersonCertLess} & $\abs{p}$ & $4\,bp+ex$ & $3\abs{p}$ & $2\abs{p}$ &$ \abs{p} $ &CL \\ \hline
				
				YT~\cite{CertLessYang} & $2\abs{q}+ \abs{p}+|s|^\dagger$ & $ 3\,\dmec +5\,\mec  $ & $3\abs{p} + \abs{s}$ & $2\abs{p} + \abs{s}$ & $ 2\,\abs{p} $  &CL \\ \hline \hline
				
				%				AQ-Hang~\cite{} & $\abs{q}$ & $4$ $Emul$ + $2$ $Eadd$  & $2\abs{q}$& $\abs{q}$  \\ \hline

				Our Scheme  & $\abs{q}$ & $3\,\mec + (k+1)\,\aec$ & $2\abs{p}$ &  $\abs{p}$&  $ t\,\abs{p} $&IDB/CL\\ \hline

			\end{tabular} 
			User Comp. denotes user computation. \\
			$ \dagger \abs{s} $ denotes the output of a signature scheme that the authors in \cite{CertLessYang} use in their scheme. 
		\end{threeparttable}
%	}

\end{table*}  \vspace{0mm}
\noindent\textbf{Analytical Performance Analysis and Comparison}  We present a detailed analytical performance comparison of our schemes with their counterparts for public key encryption/decryption, digital signature and key exchange in  \autoref{tab:Analytical:Enc}, \autoref{tab:Analytical:Signature} and \autoref{tab:Analytical:KeyExchange}, respectively. 

Our schemes have significantly lower communication overhead than their PKI-based counterpart in all cryptosystems as they do not require  the transmission of   certificates. As discussed above and also elaborated in Section \ref{subsec:ExperimentalComp}, this translates into substantial bandwidth gain as well as computational efficiency since the certification verification overhead is also lifted. Moreover, in almost all instances, our schemes also offer a lower end-to-end computational overhead compared to their PKI-based counterparts. Our schemes also offer a lower end-to-end computational delay than that of  all of their IDB and CL counterparts in all cryptosystems, with generally equal private and public key sizes. However, the master public key size of our scheme is larger than all of their counterparts.

\begin{table*}[t]
 
		\caption{Public Key Encryption Schemes on Commodity Hardware}	\label{tab:Laptop:Enc}\vspace{0mm}
	\centering
	\small
	
	%	\resizebox{\textwidth}{!}{

	\begin{threeparttable}
		\begin{tabular}{| c || c | c | c | c | c | c | c |}
			\hline
			\textbf{Scheme} & \makecell{\textbf{$ \sk $ } }& \textbf{Enc}  & \textbf{Comm.} & \textbf{Dec}  & \textbf{ $ \pk $ }  & \textbf{ $ mpk $ }  &  \textbf{E2E Delay}\\   \hline
			
			ECIES~\cite{ECIES_Shoup} & $32$ & $55$ & $690^\dagger   + \abs{M}$ & $21$ & $32$ &-& $76$\\ \hline
			
			BF~\cite{BonehID-based} & $32$ & $\approx 2000$ & $48 + \abs{M}$ & $\approx 2000$ & $32$ &32&  $\approx 4000$ \\ \hline
			
			AP~\cite{PatersonCertLess} & $32$ & $\approx 6000$ & $48 + \abs{M}$ & $\approx 2000$ & $64$ &32& $\approx 8000$ \\ \hline
			
			%			AQ-Hang~\cite{} & & & & & \\ \hline 
			
			BSS~\cite{BaekCertLessPairingFree} & $64$ & $73 $ & $64 + \abs{M}$ & $ 53 $ & $ 64 $ &32& $126$\\ \hline 
			
			WSB~\cite{BertinoCertLessDrone} & $64$ & $53$  & $64$ + $\abs{M}$ & $41$ & $64$ &32& $94$ \\ \hline \hline 
			
			%			Our ID-Based & & & & & \\ \hline
			
			Our Schemes & $32$ & $39$ & $48$ + $\abs{M}$ & $33$ & $32$ &32K& $72$ \\ \hline
			
		\end{tabular}
		\begin{tablenotes}[flushleft] \scriptsize{
				All sizes are in Bytes, and all computations are in microseconds. \\
				$\dagger$ We assume the certificate size is 578 Bytes, the   size is given in RFC 5280 \cite{rfc5280}. 
				%		$\ddagger$ The signature size is considered as 64 bytes. 
			}
		\end{tablenotes}
	\end{threeparttable}
	%	}

\end{table*}
% ================================ CLS Laptop  

\begin{table*}[t]
		%	\resizebox{\textwidth}{!}{
 
		\caption{Digital Signature Schemes on Commodity Hardware}	\label{tab:Laptop:Signature} \vspace{0mm}
	\centering
	\small

	\begin{threeparttable}
		\begin{tabular}{| c || c | c | c | c | c | c | c |}
			\hline
			\textbf{Scheme} &\makecell{\textbf{$ \sk $ } } & \textbf{\makecell{Sign}}  & \textbf{Comm.} & \textbf{\makecell{Verify}} &  \textbf{ $ \pk $ } &  { $ mpk $ } &  \textbf{E2E Delay} \\ \hline

			Schnorr~\cite{Schnorr91} & $32$ & $12$ & $642$ & $44$ & $32$ &-& $56$ \\ \hline
			
			GG~\cite{DBLP:conf/africacrypt/GalindoG09} & $32$ &  $12$ &$96$  & $44$  &  $32$ &32& $56$\\ \hline
			
			AP~\cite{PatersonCertLess} & $32$ & $\approx 2000$ & $64$ & $\approx 8000$ & $64$ &32& $\approx 10000$ \\ \hline 
			
			%			ZWXF~\cite{CerteLessSignatureZWXF} & $64$ & $52$ & $64$ & $\approx 8000$  & $32$ &  $\approx 8000$\\ \hline   
			
			KIB~\cite{CerteLessSignatureKIB} & $32$ & $20$ & $64$ & $61$ & $96$ &64& $81$ \\ \hline  \hline
			
			%			AQ-Hang~\cite{} & & & & & \\ \hline \hline 
			
			%			Our ID-Based & & & & & \\ \hline
			
			Our Schemes & $32$ & $12$ & $64$ & $27$ & $32$ &32K& $39$ \\ \hline
			
		\end{tabular}
		\begin{tablenotes}[flushleft] \scriptsize{
				All sizes are in Bytes, and all computations are in microseconds. \\
			}
		\end{tablenotes}
	\end{threeparttable}
	%	}

\end{table*}
\begin{table*}[h!]

	\caption{Key Exchange Schemes on Commodity Hardware}	\label{tab:Laptop:KeyExchange} 
	\centering
		\small
	
	%	\resizebox{\textwidth}{!}{

	\begin{threeparttable}
		\begin{tabular}{| c || c | c |  c | c | c | c | }
			\hline
			\textbf{Scheme} &  \makecell{\textbf{$ \sk $ } }  & \textbf{User Comp.}  & \textbf{Comm.} & \textbf{ $ \pk $ } & $ mpk $ &  {\textbf{E2E Delay} } \\ \hline
			
			Ephemeral ECDH~\cite{DH} & $32$ & $55$ & $642$ &  $32$ &-& $110$\\ \hline
			
			ECHMQV~\cite{DBLP:conf/crypto/Krawczyk05} & $32$ & $74$ & $642$ & $32$ & -&$148$ \\ \hline
			
			TFNS~\cite{DBLP:conf/esorics/TomidaFNS19} & $32$ & $\approx 2000$ & $32$ & $32$ &32& $\approx 4000$ \\ \hline  
			
			AP~\cite{PatersonCertLess} & $32$ & $\approx 8000$ & $96$ & $64$ &32& $\approx 16000$ \\ \hline
			
			YT\textsuperscript{$\dagger$}~\cite{CertLessYang} & $160$ & $157$ & $160$ & $128$ & 64&$314$ \\ \hline \hline
			
			%				AQ-Hang~\cite{} & $\abs{q}$ & $4$ $Emul$ + $2$ $Eadd$  & $2\abs{q}$& $\abs{q}$  \\ \hline
			
			%				WSB~\cite{BertinoCertLessDrone} & $2\abs{q}$ & $4$ $Emul$ + $2$ $Eadd$ & & $2\abs{q}$ \\ \hline \hline
			
			Our Scheme & $32$ & $57$ & $64$ &  $32$& 32K&$114$\\ \hline
			
			%			Our ID-Based & $32$ & $57$ & $64$ &  $32$\\ \hline
			
			%			Our Certificateless & $32$ & $69$ & $64$ & $32$ \\ \hline
			
		\end{tabular}
		\begin{tablenotes}[flushleft] \scriptsize{
				All sizes are in Bytes, and all computations are in microseconds. \\
				$\dagger$ The signature size is considered as 64 bytes. 
			}
		\end{tablenotes}
	\end{threeparttable}

\end{table*}  

\label{subsec:ExperimentalComp}\noindent\textbf{Experimental Performance Analysis and Comparison:} 
We now further elaborate on  the details of our performance analysis and comparison with experimental results. We conduct experiments on both commodity hardware and low-end embedded devices that are typically found in IoT systems to objectively assess the performance of our schemes as well as their counterparts. Our open-sourced implementation is available via the following link. 

\begin{center}
\MYhref{https://github.com/Rbehnia/CertFreeSystems}{https://github.com/Rbehnia/CertFreeSystems}
\end{center}

\noindent $\bullet$~{\em Experiments on Commodity Hardware:} We used an i7 Skylake laptop equipped with a 2.6 GHz CPU and 12 GB RAM in our experiments. We implemented our schemes on the FourQ curve~\cite{FourQ} which offers fast elliptic curve operations for $\kappa=128$-bit security. We instantiated our random oracles with blake2 hash function\footnote{\url{http://131002.net/blake/blake.pdf}}, which offers high efficiency and security. For our parameters, we selected $ k=18 $ and $ t=1024 $. We conservatively estimated the costs of our counterparts based on the microbenchmarks on our evaluation setup of (i) FourQ curve for schemes that do not require pairing and (ii) PBC library\footnote{\url{https://crypto.stanford.edu/pbc/}} on a curve with $\kappa=80$-bit security (we used the most efficient alternative for them) for schemes that require pairing. %We estimated their costs on fast FourQ curve to be fair with them. 

%\noindent \textbf{Hardware Configuration and Software Libraries.}  

As depicted in \autoref{tab:Laptop:Enc}, the encryption and decryption algorithms of our schemes are more efficient than their counterparts in the identity-based and certificateless settings. More specifically, the end-to-end delay of our schemes is  $ \approx25$\%  lower than that in \cite{BertinoCertLessDrone}, which is specifically suitable for aerial drones.  One could also notice how the communication overhead is lower in certificateless and identity-based schemes since there is no need for certificate transmission.
\begin{table*}[t]

		\caption{Public Key Encryption Schemes on 8-bit AVR processor} \label{tab:8bit:Enc}
	\centering
		\small
	
	%	\resizebox{\textwidth}{!}{

	\begin{threeparttable}
		\begin{tabular}{| c || c | c | c | c | c | c |}
			\hline
			\textbf{Scheme} & \makecell{\textbf{$ \sk $ } }& \textbf{Enc}  & \textbf{Comm.} & \textbf{Dec}  & \textbf{ $ \pk $ }  & $ mpk $\\   \hline 
			
			ECIES~\cite{ECIES_Shoup} & $32$ & 17 950 967 & $610 + \abs{c}$ & 6 875 861 & $32$ &-\\ \hline
			
			BF~\cite{BonehID-based} & $32$ & 60 802 555 & $48 + \abs{c}$ & 56 278 012 &$32$&32\\ \hline
			
			AP~\cite{PatersonCertLess} & $32$ & 166 213 087 & $48 + \abs{c}$ & 58 912 609 & $64$ &32\\ \hline
			
			%			AQ-Hang~\cite{} & & & & & \\ \hline 
			
			BSS~\cite{BaekCertLessPairingFree} & $64$ & 22 791 835 & $64 + \abs{M}$ & 16 590 321 & $ 64 $ &32\\ \hline 
			
			WSB~\cite{BertinoCertLessDrone} & $64$ & 17 091 636 & $64$ + $\abs{c}$ & 13 631 755 & $64$& 32\\ \hline \hline 
			
			%			Our ID-Based & & & & & \\ \hline
			
			Our Schemes  & $32$ & 11 789 632 & $48$ + $\abs{c}$ & 9 883 161 & $32$&32K\\ \hline
			
		\end{tabular}
		\begin{tablenotes}[flushleft] \scriptsize{
				All sizes are in Bytes, and all computations are in CPU Cycles. 
			}
		\end{tablenotes}
	\end{threeparttable}

	%	}
\end{table*}

\begin{table*}[t]

		\caption{Digital Signature Schemes on 8-bit AVR processor} \label{tab:8bit:Signature}
	\centering
	\small
	
	%	\resizebox{\textwidth}{!}{
 
	\begin{threeparttable}
		\begin{tabular}{| c || c | c | c | c | c | c |}
			\hline
			\textbf{Scheme} &\makecell{\textbf{$ \sk $ } } & \textbf{\makecell{Sign}}  & \textbf{Comm.} & \textbf{\makecell{Verify}} &  \textbf{ $ \pk $ } & $ mpk $  \\ \hline
			
			Schnorr~\cite{Schnorr91} & $32$ & 4 263 298 & $642$ & 17 902 958 & $32$ & -\\ \hline
			
			GG~\cite{DBLP:conf/africacrypt/GalindoG09} & $32$ & 4 263 298 & $96$  &  17 902 958 &   $32$ & 32\\ \hline
			
			AP~\cite{PatersonCertLess} & $32$ & 62 487 032 & $64$ & 221 226 015 & $64$ & 32\\ \hline 
			
			%			ZWXF~\cite{CerteLessSignatureZWXF} & $64$ & 16 981 570 & $64$ & 220 265 370 & $32$ \\ \hline   
			
			KIB~\cite{CerteLessSignatureKIB} & $32$ & 7 025 861 & $64$ & 20 617 583 & $96$ &64 \\ \hline  \hline
			
			%			AQ-Hang~\cite{} & & & & & \\ \hline \hline 
			
			%			Our ID-Based & & & & & \\ \hline
			
			Our Schemes  & $32$ & 4 263 298 & $64$ & 10 955 369 & $32$&32K  \\ \hline
			
		\end{tabular}
		\begin{tablenotes}[flushleft] \scriptsize{
				All sizes are in Bytes, and all computations are in CPU Cycles. 
			}
		\end{tablenotes}
	\end{threeparttable}
	%	}

\end{table*}

\begin{table*}[t]

		\caption{Key Exchange Schemes on 8-bit AVR processor} \label{tab:8bit:KeyExchange} 
	\centering
	\small
	
	%	\resizebox{\textwidth}{!}{

	\begin{threeparttable}
		\begin{tabular}{| c || c | c |  c | c | c |}
			\hline
			\textbf{Scheme} &  \makecell{\textbf{$ \sk $ } }  & \textbf{User Comp.}  & \textbf{Comm.} & \textbf{ $ \pk $ } & $ mpk $\\ \hline
			
			Ephemeral ECDH~\cite{DH} & $32$ & 18 039 710& $642$ &  $32$ &-\\ \hline
			
			ECHMQV~\cite{DBLP:conf/crypto/Krawczyk05} & $32$ & 24 601 857 & $642$ & $32$& -\\ \hline
			
			TFNS~\cite{DBLP:conf/esorics/TomidaFNS19} & $32$ & 82 356 781 & $32$ & $32$ &32 \\ \hline  
			
			AP~\cite{PatersonCertLess} & $32$ & 221 226 015 & $96$ & $64$  &32 \\ \hline
			
			YT~\cite{CertLessYang} & $160$ & 54 212 824 & $160$ & $128$ &64 \\ \hline \hline
			
			%				AQ-Hang~\cite{} & $\abs{q}$ & $4$ $Emul$ + $2$ $Eadd$  & $2\abs{q}$& $\abs{q}$  \\ \hline
			
			%				WSB~\cite{BertinoCertLessDrone} & $2\abs{q}$ & $4$ $Emul$ + $2$ $Eadd$ & & $2\abs{q}$ \\ \hline \hline 
			
			Our Scheme & $32$ & 18 015 493 & $64$ &  $32$  & 32K\\ \hline
			
			%			Our ID-Based & $32$ & 18 015 493 & $64$ &  $32$\\ \hline
			
			%			Our Certificateless & $32$ & 20 995 341 & $64$ & $32$ \\ \hline
			
		\end{tabular}
		\begin{tablenotes}[flushleft] \scriptsize{ 
				All sizes are in Bytes, and all computations are in CPU Cycles. 
			}
		\end{tablenotes}
	\end{threeparttable}
	%	}

\end{table*}

As shown in \autoref{tab:Laptop:Signature}, our schemes enjoy from the fastest verification algorithms among all its counterparts. This is again due to the novel way the user keys are derived and results in $ 30 $\% and $ 52 $\% faster end-to-end delay as compared to its most efficient identity-based  \cite{DBLP:conf/africacrypt/GalindoG09} and certificateless \cite{CerteLessSignatureKIB} counterparts, respectively. \emph{One may notice that although  schemes in \cite{DBLP:conf/africacrypt/GalindoG09,Schnorr91,CerteLessSignatureKIB}, along with our schemes, all  require a scalar multiplication in their signature generation (see Table~\ref{tab:Analytical:Signature}), their experimental costs differ. The reason for this discrepancy is the fact that the cost of scalar multiplication over the generator  $P$  is faster than the scalar multiplication over any curve points, and these differences are considered in the experimental evaluations.}
%\RB{@ozgur, please add that line about diff multiplications here}

 As shown in  \autoref{tab:Laptop:KeyExchange}, our schemes' performance is similar to that in their counterparts in traditional PKI setting \cite{DH,DBLP:conf/crypto/Krawczyk05}. However, they outperform the most efficient counterpart in certificateless setting \cite{CertLessYang} by having $ 65 $\% lower end-to-end delay and $ 60 $\% smaller load for communication.

\noindent $\bullet$ {\em Experiments on Low-End Device:}  We used an 8-bit AVR ATmega 2560 microprocessor to evaluate the costs of our schemes on an IoT device. AVR ATmega 2560 is a low-power microprocessor with 256 KB flash, 8 KB SRAM, 4 KB EEPROM, and operates at 16 MHz frequency. %MAYBE ADD SUCH A SENTENCE: It is extensively used in practice, medical devices drones etc. (Cite IoD-Crypt?).
%We again used the FourQ curve~\cite{FourQ} and blake2 hash function~\cite{blakeHash} in our implementations. 
We used the 8-bit AVR library of the FourQ curve presented in~\cite{FourQ8bit}. For our counterparts, we again conservatively estimated their costs based on microbenchmarks in (i) FourQ curve 8-bit AVR implementation~\cite{FourQ8bit}, and (ii) NanoECC~\cite{NanoECC_Pairing}, that implements a curve that supports pairings on 8-bit AVR microprocessors and offers $\kappa=80$-bit security.

 As depicted  in \autoref{tab:8bit:Enc}, our schemes outperform all of their identity-based and certificateless counterparts  and have a more efficient encryption algorithm than \cite{ECIES_Shoup}. Our decryption algorithms, while   being more efficient than all of their   identity-based and certificateless counterparts, are slightly less efficient than the one in \cite{ECIES_Shoup}. Similar to the trend  in the analytical performance, our signature schemes   outperform their counterparts. As \autoref{tab:8bit:Signature} shows, our schemes' signing algorithm are amongst the most efficient ones, while the verification algorithm outperforms all the counterparts with similar communication overhead.

\noindent\textbf{Limitations:} The main limitation of  our schemes is the size of the master public key. Note that if there are different TTP in different domains and users often communicate with the users in those domains, it would make sense to store different $ mpk $. Otherwise, the users only need to store   $ mpk $ for their own systems. We can reduce the size of the $ mpk $ in exchange for a small performance loss. For instance, with $k =32  $, we can reduce the size of the $ mpk $ by four times. 

%
%
%
%
%Cite in related work: Hiearchical IBE, IBS~\cite{Hierarchical_IBE_IBS}, Leakage resilient certificateless public key encryption~\cite{Leakage_Resilient_Certificateless_PKE}, Continuous leakage resilient IBE~\cite{Leakage_Resilient_IBE}

%\input{tablesOzgur}

\section{Related Work}
There is a comprehensive literature covering different aspects of IDB and CL systems. Remark that   most of the closely related works have been discussed in Section 3 and   5 in terms of security models and performance metrics. Overall, the main difference in our work is to focus on the achievement of inter-compatibility between IDB and CL with a high efficiency, with respect to existing alternatives.

The idea of IDB cryptography was proposed by Shamir \cite{ShamirID-Based}. However, the first practical instance of such schemes was proposed later by Boneh and Franklin \cite{BonehID-based} using bilinear pairing.  %Sakai and Kasahara  \cite{Sakai2003} proposed  an encryption scheme based on the exponent inversion, where encryption does not require any pairing computation and there is only one pairing computation in the decryption algorithm.    
To get the full adaptive-identity, chosen-ciphertext  security guarantees without sacrificing performance, Boyen  \cite{DBLP:journals/ijact/Boyen08}   described  an  augmented versions of  the scheme in \cite{IdendityBasedEncSelectiveeIDBonehB04a}   in the random-oracle model. However, the augmented version also requires multiple pairing computations in the decryption algorithm. Following \cite{BellareIDBased}, several pairing-free signature scheme were proposed. Galindo and Garcia \cite{DBLP:conf/africacrypt/GalindoG09}  proposed a lightweight IDB signature scheme   based on  \cite{Schnorr91} with reduction to the discrete logarithm problem. 

CL cryptography \cite{PatersonCertLess}  was proposed to address the private key escrow problem in IDB systems. In the same paper, the authors proposed an IND-CCA encryption scheme along with a signature and key exchange schemes.  Following their work, Baek et al. \cite{BaekCertLessPairingFree} proposed the first IND-CCA  secure certificateless encryption scheme without pairing.  The scheme is constructed using Schnorr-like signatures in partial private key generation algorithm. Recently Won et al. \cite{BertinoCertLessDrone} proposed another efficient IND-CCA encryption scheme that is specifically used for key encapsulation mechanisms.  There has been a number of works that focus on the security models  of certificateless systems. In most  of the proposed models (e.g., \cite{PatersonCertLess}) a \TII adversary is assumed to generate the keys honestly, and initiate the attacks only after the setup phase. \\
%However, Au et al. \cite{DBLP:journals/iacr/AuCLMWY06}  defined a security model against a \TII adversary where it is assumed that the adversary is able to run the setup algorithm maliciously. In our schemes, we follow the security model in \cite{BaekCertLessPairingFree} which  can be extended to address the attacks in the model proposed in \cite{DBLP:journals/iacr/AuCLMWY06}. 

\noindent \textbf{Acknowledgments.} This work is supported by the NSF CAREER Award CNS-1917627, an unrestricted gift via Cisco Research Award and the Department of Energy Award DE-OE000078

%\noindent \textbf{Acknowledgment. }This work is supported by NSF   award  \#1652389.

% initial proposals relied on bilinear pairing operations which is known to be quite costly.  Following the work in \cite{BellareIDBased}, several works were proposed 
%\input{Security_RejectionSampling}
%\input{Security_Asympototic} 
%\input{Security_params} 
%\input{Performance} 
%\input{Conclusion}

\bibliographystyle{IEEEtranS.bst}
 
{ \small
\bibliography{crypto-etc} } 
%\input{app}
%\begin{appendix}
%%	\renewcommand{\thesection}{\appendixname~\Alph{section}}
%	
%	\input{Appendix}
%\end{appendix}

%\input{extras}
\end{document}